\def\e#1{\emph{#1}}
\newcommand{\eat}[1]{}
\newcommand{\set}[1]{\{#1\}}
\def\eqdef{\stackrel{\textsf{\tiny def}}{=}}
\newcommand{\algname}[1]{{\sf #1}}
\def\myrulewidth{4.00in}
\def\therule{\rule{\myrulewidth}{0.2pt}}
\newenvironment{algseries}[2]
{\begin{figure}[#1]
\def\thecaption{\caption{#2}}
\vskip-0.8em\begin{tabular}{p{\myrulewidth}}\therule\end{tabular}\vskip0.2em}
{\thecaption\end{figure}}
\newenvironment{insidealg}[2]
{\normalsize\begin{insidecode}{#1}{#2}{Algorithm}}
{\end{insidecode}}
\newenvironment{insidesub}[2]
{\begin{insidecode}{#1}{#2}{Subroutine}}
{\end{insidecode}}
\newenvironment{insidecode}[3]
{
\begin{tabular}{p{\myrulewidth}}
\multicolumn{1}{c}{\rule{0mm}{3mm}{\bf #3} $\algname{#1}(\mbox{#2})$\vspace{-0.6em}}\\
\therule\vskip-0.8em\therule
\vspace{-1em}
\begin{algorithmic}[1]}
{\end{algorithmic}
\vskip-0.3em\therule
\end{tabular}}
\def\consts{\mathsf{Const}}
\newcommand{\depset}{\mathrm{\Delta}}
\def\phi{\varphi}
\def\signature{\mathcal{S}}
\newenvironment{repeatresult}[2]
{\vskip0.5em\par\textsc{#1} #2.\em}
{\vskip1em}
\def\partitle#1{\vskip0.3em\par\noindent\textbf{#1.}\,\,}
\def\val#1{\mathtt{#1}}
\newcommand{\mathsc}[1]{{\normalfont\textsc{#1}}}
\def\fdparam#1#2{\langle #1,#2\rangle}
\def\maxsrepfd#1#2{\mathsc{CRep}\fdparam{#1}{#2}}
\DeclareMathOperator*{\argmax}{argmax}
\begin{document}

\clubpenalty=10000 
\widowpenalty = 10000

\title{The Complexity of Computing a Cardinality Repair for Functional
  Dependencies}

\eat{
\numberofauthors{2} 
\author{
\alignauthor
Ester Livshits\\
      \affaddr{Technion - Israel Institute of Technology}\\
      \affaddr{Haifa 32000, Israel}
       \email{esterliv@cs.technion.ac.il}
\alignauthor
Benny Kimelfeld\\
      \affaddr{Technion - Israel Institute of Technology}\\
      \affaddr{Haifa 32000, Israel}
       \email{bennyk@cs.technion.ac.il}
}}

\author[1]{Ester Livshits}
\author[2]{Benny Kimelfeld}

\affil[1]{Technion - Israel Institute of Technology \\
Haifa 32000, Israel\\
\texttt{esterliv@cs.technion.ac.il}
}

\affil[2]{Technion - Israel Institute of Technology \\
Haifa 32000, Israel\\
\texttt{bennyk@cs.technion.ac.il}
}

\authorrunning{E.~Livshits and B.~Kimelfeld}

\maketitle
\begin{abstract}
  For a relation that violates a set of functional dependencies, we
  consider the task of finding a maximum number of pairwise-consistent
  tuples, or what is known as a ``cardinality repair.'' We present a
  polynomial-time algorithm that, for certain fixed relation schemas
  (with functional dependencies), computes a cardinality
  repair. Moreover, we prove that on any of the schemas not covered by
  the algorithm, finding a cardinality repair is, in fact, an NP-hard
  problem. In particular, we establish a dichotomy in the complexity
  of computing a cardinality repair, and we present an efficient
  algorithm to determine whether a given schema belongs to the
  positive side or the negative side of the dichotomy.
\end{abstract}

\keywords{Inconsistent Databases; Repairs; Cardinality Repairs; Functional Dependencies}

\def\athree{a_3}
\def\aiii{a_{\mathrm{iii}}}
\def\htwo{h_2}
\def\hthree{h_3}
\def\hii{h_{\mathrm{ii}}}
\def\mtwo{m_2}
\def\mi{m_\mathrm{i}}
\def\mone{m_1}
\def\nfour{n_4}

\section{Preliminaries}\label{sec:preliminaries}

\def\sabc{\signature_{\mathrm{rl}}}
\def\dabc{\depset_{\mathrm{rl}}}
\def\stk{\signature_{2\mathrm{fd}}}
\def\dtk{\depset_{2\mathrm{fd}}}
\def\stfd{\signature_{2\mathrm{r}}}
\def\dtfd{\depset_{2\mathrm{r}}}
\def\str{\signature_{\mathrm{tr}}}
\def\dtr{\depset_{\mathrm{tr}}}
\def\rtk{R_{2\mathrm{fd}}}
\def\rabc{R_{\mathrm{rl}}}
\def\rtfd{R_{2\mathrm{r}}}
\def\rtr{R_{\mathrm{tr}}}

We first present some basic terminology and notation that we use
throughout the paper.

\subsection{Relational Signatures and Instances}
We assume three infinite collections: \e{attributes} (column names),
\e{relation symbols} (table names), and \e{constants} (cell values).
A \e{heading} is a sequence $(A_1,\dots,A_k)$ of distinct attributes,
where $k$ is the \e{arity} of the heading. A \e{signature}
$\signature$ is a mapping from a finite set of relation symbols $R$ to
headings $\signature(R)$. We use the conventional notation
$R(A_1,\dots,A_k)$ to denote that $R$ is a relation symbol that is
assigned the heading $(A_1,\dots,A_k)$. An \e{instance} $I$ of a
signature $\signature$ maps every relation symbol $R(A_1,\dots,A_k)$
to a finite set, denoted $R^I$, of tuples $(c_1,\dots,c_k)$ where each
$c_i$ is a constant.  We may omit stating the signature $\signature$
of an instance $I$ when $\signature$ is clear from the context or
irrelevant.

Let $\signature$ be a signature, $R$ a relation symbol of
$\signature$, $I$ an instance of $\signature$, and $t$ be a tuple in
$R^I$. We refer to the expression $R(t)$ as a \e{fact of $I$}. By a
slight abuse of notation, we identify an instance $I$ with the set of
its facts. For example, $R(t)\in I$ denotes that $t$ is a tuple in
$R^I$.  As another example, $J\subseteq I$ means that
$R^J\subseteq R^I$ for every relation symbol $R$ of $\signature$; in
this case, we say that $J$ is \e{subinstance} of $I$.

\subsection{Functional Dependencies}

A \e{Functional Dependency} (\e{FD} for short) over a signature
$\signature$ is an expression of the form $R:X\rightarrow Y$, where
$R$ is a relation symbol and $X$ and $Y$ are sets of attributes of
$R$. When $R$ is clear from the context, we simply write
$X\rightarrow Y$.  We may also write $X$ and $Y$ by simply
concatenating the attribute symbols; for example, we may write
$AB\rightarrow C$ instead of $\set{A,B}\rightarrow\set{C}$ for the
relation symbol $R(A,B,C)$. An FD $X\rightarrow Y$ is \e{trivial} if
$Y\subseteq X$, and otherwise it is \e{nontrivial}. We say that an attribute $A$ in an FD $X\rightarrow Y$ is trivial if it holds that $A\in X$ and $A\in Y$. In this case, removing a trivial attribute from the FD means removing it from $Y$. For example, if we remove the trivial attributes from the FD $AB\rightarrow ACD$, the result is $AB\rightarrow CD$.

An instance $I$ \e{satisfies} an FD $R:X\rightarrow Y$ if for every
two facts $f$ and $g$ over $R$, if $f$ and $g$ agree on (i.e., have
the same constants in the position of) the attributes of $X$, then
they also agree on the attributes of $Y$. We say that $I$ satisfies a
set $\depset$ of FDs if $I$ satisfies every FD in $\Delta$; otherwise,
we say that $I$ \e{violates} $\Delta$. Two sets of FDs over the same
signature are \e{equivalent} if every instance that satisfies one also
satisfies the other. For example,
$\Delta=\set{R:A\rightarrow BC,R:C\rightarrow A}$ and
$\set{R:A\rightarrow C,R:C\rightarrow AB}$ are equivalent.  An FD
$X\rightarrow Y$ is \e{entailed} by $\depset$ (denoted by
$\depset\models X\rightarrow Y$) if for every instance $I$ over the
schema, if $I$ satisfies $\depset$, then it also satisfies
$X\rightarrow Y$.  We denote by $\depset_{|R}$ the restriction of
$\depset$ to the FDs over $R$ (i.e., those of the form
$R:X\rightarrow Y$).

Let $\depset$ be a set of FDs. We say that an FD, $(X_i\rightarrow Y_i)\in\depset$,
is a \e{local minimum} of $\depset$, if there is no other FD, $(X_j\rightarrow Y_j)\in\depset$, such that $X_j\subset X_i$. We say that the FD is a \e{global minimum} of $\depset$, if it holds that $X_i\subseteq X_j$ for every FD $(X_j\rightarrow Y_j)\in\depset$. 

An \e{FD schema} is a pair $(\signature,\depset)$, where $\signature$
is a signature and $\depset$ is a set of FDs over $\signature$.  Two
FD schemas $(\signature,\depset)$ and $(\signature',\depset')$ are
\e{equivalent} if $\signature=\signature'$ and $\depset$ is equivalent
to $\depset'$.  We say that an FD schema is a \e{chain} if for every two FDs $X_1\rightarrow Y_1$ and $X_2 \rightarrow Y_2$ over the same relation symbol, either $X_1 \subseteq X_2$ or $X_2 \subseteq X_1$~\cite{DBLP:conf/pods/LivshitsK17}.

\begin{table}
\def\arraystretch{1.4}
\caption{Specific FD schemas\label{table:special-schemas}}
\vskip0.5em
\centering
\begin{tabular}{c|c|c}
\textbf{FD Schema} & \textbf{Signature} & \textbf{FDs}\\\hline
$(\stk,\dtk)$ & $R(A,B,C)$ & $AB\rightarrow C$, $C\rightarrow B$\\
$(\sabc,\dabc)$ & $R(A,B,C)$ & $A\rightarrow B$, $B\rightarrow C$\\
$(\stfd,\dtfd)$ & $R(A,B,C)$ & $A\rightarrow C$, $B\rightarrow C$\\
$(\str,\dtr)$ & $R(A,B,C)$ & $AB\rightarrow C$, $AC\rightarrow B$, $BC\rightarrow A$\\
\hline
\end{tabular}
\end{table}

For example, Table~\ref{table:special-schemas} depicts
specific schemas that we refer to throughout the paper.
None of these schemas is a chain, while
the schema $\depset=\set{\emptyset\rightarrow A,B\rightarrow C}$ is a chain since it holds that $\emptyset\subset B$.

\subsection{Repairs}
Let $(\signature,\depset)$ be an FD schema and let $I$ be an inconsistent instance of $\signature$. We say that $J$ is a \e{subset repair} of $I$, or \e{s-repair} for short, if $J$ is a maximal consistent subinstance of $I$ (that is, $J$ does not violate any FD in $\depset$, and it is not possible to add another fact from $I\setminus J$ to $J$ without violating consistency) \cite{DBLP:conf/icdt/AfratiK09, TIPSTER98}. We say that $J$ is a \e{cardinality repair} of $I$, or \e{C-repair} for short, if $J$ is a maximum s-repair of $I$ (that is, there is no other subset repair of $I$ that contains more facts that $J$ does) \cite{DBLP:conf/icdt/LopatenkoB07}.

\section{Main Result}
In this section, we present our main result, which is a dichotomy for the problem of finding a C-repair of an inconsistent database. Note that since we only consider FD schemas, conflicting facts always belong to the same relation. Thus, if the schema contains two or more relations, we can solve the problem for each relation separately. Hence, our analysis can be restricted to single-relation schemas.

Let $(\signature,\depset)$ be an FD schema, and let $R(A_1,\dots,A_k)$ be the single relation in the schema. Let $A=\set{A_{i_1},\dots,A_{i_n}}$ be a subset of $\set{A_1,\dots,A_k}$. We denote by $\pi_{\overline{A}}(\signature)$ the projection of $\signature$ onto the attributes in $\set{A_1,\dots,A_k}\setminus A$. We also denote by $\pi_{\overline{A}}(\depset)$ the result of removing the attributes in $A$ from all the FDs in $\depset$. In addition, we denote by $\pi_{\overline{A}}(X\rightarrow Y)$ the result of removing the attributes in $A$ from the FD $(X\rightarrow Y)\in\depset$ (that is, $\pi_{\overline{A}}(X\rightarrow Y)=(X\setminus A)\rightarrow (Y\setminus A)$).

Let $(\signature,\depset)$ be an FD schema. We start by defining the following simplification steps:
\partitle{Simplification 1}
If some attribute $A_i$ appears on the left-hand side of all the FDs in $\depset$, remove the attribute $A_i$ from $\signature$ and from all the FDs in $\depset$. We denote the result by $(\pi_{\overline{\set{A_i}}}(\signature),\pi_{\overline{\set{A_i}}}(\depset))$.
\partitle{Simplification 2}
If $\depset$ contains an FD of the form $\emptyset\rightarrow X$, remove the attributes in $X$ from $\signature$ and from all the FDs in $\depset$. We denote the result by $(\pi_{\overline{X}}(\signature),\pi_{\overline{X}}(\depset))$.
\partitle{Simplification 3}
If $\depset$ contains two FDs $X_1\rightarrow Y_1$ and $X_2\rightarrow Y_2$, such that $X_1\subseteq Y_2$ and $X_2\subseteq Y_1$, and for each FD $Z\rightarrow W$ in $\depset$ it holds that $X_1\subseteq Z$ or $X_2\subseteq Z$, remove the attributes in $X_1\cup X_2$ from $\signature$ and from all the FDs in $\depset$. We denote the result by $(\pi_{\overline{X_1\cup X_2}}(\signature),\pi_{\overline{X_1\cup X_2}}(\depset))$.

\begin{example}
Let $(\signature,\depset)$ be an FD schema, such that $\depset=\set{\emptyset\rightarrow A, DB\rightarrow ACE, DC\rightarrow B, DB\rightarrow F}$. We can apply simplification $1$ to the schema, since it contains the FD $\emptyset\rightarrow A$. The result will be $\pi_{\overline{\set{A}}}(\depset)=\set{DB\rightarrow CE, DC\rightarrow B, DB\rightarrow F}$. Next, we can apply simplification $2$, since the attribute $D$ appears on the left-hand side of all the FDs in $\pi_{\overline{\set{A}}}(\depset)$. The result will be $\pi_{\overline{\set{A}}}(\pi_{\overline{\set{D}}}(\depset))=\set{B\rightarrow CE, C\rightarrow B, B\rightarrow F}$. Finally, since $\depset$ contains two FDs $B\rightarrow CE$ and $C\rightarrow B$, and it holds that $C\subset CE$, we can apply simplification $3$. The result will be The result will be $\pi_{\overline{\set{A}}}(\pi_{\overline{\set{D}}}(\pi_{\overline{\set{B,C}}}(\depset)))=\set{\emptyset\rightarrow E, \emptyset\rightarrow \emptyset, \emptyset\rightarrow F}$.
\end{example}

\begin{algseries}{t}{\label{alg:maxind} Can the problem of finding a C-repair be solved in polynomial time?}
\begin{insidealg}{CRepPTime}{$\signature,\depset$}
\WHILE{some simplification can be applied to $(\signature,\depset)$}
\STATE remove trivial FDs and attributes from $\depset$
\STATE apply the simplification to $(\signature,\depset)$
\ENDWHILE

\IF {$\depset=\emptyset$}
\STATE \textbf{return} true
\ELSE
\STATE \textbf{return} false
\ENDIF
\end{insidealg}
\end{algseries}

\def\mainthm{
Let $(\signature,\depset)$ be an FD schema. Then, the problem $\maxsrepfd{\signature}{\depset}$ can be solved in polynomial time if and only if $\algname{CRepPTime}(\signature,\depset)$ returns \e{true}.
}

\begin{theorem}\label{thm:main-maxsrep}
\mainthm
\end{theorem}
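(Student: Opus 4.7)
The plan is to prove both halves of the dichotomy separately. For the positive direction, I will show that each of the three simplifications is \emph{reduction-safe}: a polynomial-time algorithm for $\maxsrep$ on the simplified schema yields one for the original. For Simplification~1, since every FD has $A_i$ on its left-hand side, two tuples that disagree on $A_i$ cannot jointly violate any FD; thus the input instance $I$ splits into independent parts indexed by $A_i$-values, and the C-repair of $I$ is the union of C-repairs of the parts, each solved on $(\pi_{\overline{\set{A_i}}}(\signature),\pi_{\overline{\set{A_i}}}(\depset))$. For Simplification~2, any consistent subinstance must agree on $X$; we iterate over all constants $c$ that appear on $X$ in $I$, restrict $I$ to tuples having $X=c$, solve C-repair on the restriction using $\pi_{\overline{X}}(\depset)$, and return the best of these. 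For Simplification~3, the conditions $X_1\subseteq Y_2$, $X_2\subseteq Y_1$ imply that in any consistent subinstance the projections on $X_1$ and on $X_2$ are in bijection, so the pair $(X_1,X_2)$ acts as a compound key; because every other FD $Z\rightarrow W$ contains $X_1$ or $X_2$ in $Z$, fixing a consistent pairing of $X_1$- and $X_2$-values reduces what remains to an instance of $\maxsrep$ on the simplified schema. The base case $\depset=\emptyset$ is trivial, since $I$ itself is then consistent. Running these reductions within the loop of $\algname{CRepPTime}$ gives a polynomial-time algorithm whenever the output is \emph{true}.

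For the negative direction, call a schema \emph{irreducible} if no simplification applies to it (after discarding trivial FDs and attributes). The algorithm returns \emph{false} precisely when it halts at an irreducible schema with $\depset\neq\emptyset$. The plan is to prove a structural classification: every irreducible schema with $\depset\neq\emptyset$ is hard for C-repair, by admitting a polynomial-time reduction from one of the four basis schemas in Table~\ref{table:special-schemas}. I would first prove that $\maxsrep$ is NP-hard on each of $(\stk,\dtk)$, $(\sabc,\dabc)$, $(\stfd,\dtfd)$, and $(\str,\dtr)$, using direct reductions from classical NP-hard problems (for instance, Vertex Cover and variants of Monotone~3SAT or 3-Dimensional Matching, depending on which basis fits best). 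Then I would show that any irreducible $(\signature,\depset)$ with $\depset\neq\emptyset$ contains an ``embedded'' copy of one of these four patterns: the absence of Simplification~1 gives, for every attribute $A_i$, an FD whose LHS omits $A_i$; the absence of Simplification~2 gives FDs with nonempty LHSs; and the absence of Simplification~3 rules out the bijective $X_1,X_2$ cycle on which the third rule depends. These combinatorial constraints should force one of the four basis patterns to appear on a selected triple of attributes.

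The main obstacle is this structural classification on the hardness side: proving that the three simplifications are complete, i.e.\ that every irreducible schema truly does embed one of the four patterns and that the embedding is faithful enough to carry an NP-hardness reduction. The case analysis must be driven by how the FDs in $\depset$ interact, and at each branching point (e.g., minimum-size LHSs that share or do not share attributes, whether some LHS is contained in the RHS of another FD, etc.) one must either identify a simplification that the schema would then admit (contradiction) or exhibit one of the four basis patterns. Keeping this case analysis exhaustive yet closed under equivalence of FD sets is the technical heart of the argument, and I expect it to be where most of the proof's work lies.
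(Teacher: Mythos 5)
Your overall plan mirrors the paper's: reduction-safety of the three simplifications for the positive side, and, for the negative side, NP-hardness of the four basis schemas of Table~\ref{table:special-schemas} plus a structural argument that every irreducible schema with $\depset\neq\emptyset$ embeds one of them. However, there are two concrete gaps. First, for Simplification~3 you observe that $X_1\subseteq Y_2$ and $X_2\subseteq Y_1$ force the $X_1$- and $X_2$-projections of any consistent subinstance to be in bijection, and you then say that ``fixing a consistent pairing'' reduces the problem to the simplified schema --- but you never say how to \emph{choose} the pairing, and the number of candidate pairings is exponential. The missing algorithmic ingredient is to build a bipartite graph whose left (resp.\ right) nodes are the $X_1$-values (resp.\ $X_2$-values) occurring in $I$, weight the edge $(x,y)$ by the size of a C-repair of the block $B_{x||y}$ computed recursively on the simplified schema, and take a maximum-weight matching; this is exactly what $\algname{FindCRepS3}$ does, and without it the positive direction does not go through for Simplification~3.

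Second, on the hardness side your classification establishes NP-hardness only for the \emph{terminal} irreducible schema reached after running the simplification loop, whereas the theorem requires hardness of the \emph{original} schema $(\signature,\depset)$. Since each simplification deletes attributes, these are different schemas, and you need a separate step lifting hardness backwards through every simplification applied along the way. The paper supplies this with three further fact-wise reductions (one per simplification) that embed the simplified schema into the original by padding the removed attributes with a fixed constant $\odot$; these are easy but they are a required link in the chain, and your proposal omits them entirely. With those two additions --- and with the understanding that the structural case analysis you correctly identify as the technical heart must be organized around two (or three) local minima $X_1\rightarrow Y_1$, $X_2\rightarrow Y_2$ of $\depset$ and the interaction of $X_1,X_2$ with the closures $X_1^+,X_2^+$ --- your outline matches the paper's proof.
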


The algorithm $\algname{CRepPTime}$, depicted in Figure~\ref{alg:maxind}, starts with the given FD schema $(\signature,\depset)$, and at each step it tries to apply one of the simplifications to $(\signature,\depset)$. If at some point no simplification can be applied to $(\signature,\depset)$, there are two possible cases:
\begin{itemize}
\item $\depset$ is empty. In this case, there is a polynomial time algorithm for solving $\maxsrepfd{\signature}{\depset}$.
\item $\depset$ is not empty. In this case, $\maxsrepfd{\signature}{\depset}$ is NP-hard.
\end{itemize}

In the next sections we prove Theorem~\ref{thm:main-maxsrep}. 

\section{Finding a Cardinality Repair}
In this section we introduce a recursive algorithm for finding a
C-repair for a given instance $I$ over an FD schema $(\signature,
\depset)$. The algorithm is depicted in
Figures~\ref{alg:findcrep-main} and~\ref{alg:findcrep-sub}. If the
problem $\maxsrepfd{\signature}{\depset}$ can be solved in polynomial
time, the algorithm will return a C-repair, otherwise it will return
$\emptyset$. The algorithm's structure is similar to that of
$\algname{CRepPTime}$, and it uses the three subroutines:
$\algname{FindCRepS1}$, $\algname{FindCRepS2}$ and
$\algname{FindCRepS3}$. We will now introduce these three subroutines.

The subroutine $\algname{FindCRepS1}$ is used if simplification $1$ can be applied to $(\signature,\depset)$. It divides the given instance $I$ into blocks of facts that agree on the value of attribute $A_i$, and then finds a C-repair for each block separately, using the algorithm $\algname{FindCRep}$. Then, it returns the union of all those C-repairs. 

The subroutine $\algname{FindCRepS2}$ is used if simplification $2$ can be applied to $(\signature,\depset)$. It divides the given instance $I$ into blocks of facts that agree on the values of all the attributes in $X$, and then finds a C-repair for each block separately, using the algorithm $\algname{FindCRep}$. Then, the algorithm selects the C-repair that contains the most facts among those C-repairs and returns it. 

The subroutine $\algname{FindCRepS3}$ is used if simplification $3$ can be applied to $(\signature,\depset)$. It divides the given instance $I$ into blocks of facts that agree on the values of all the attributes in $X_1\cup X_2$, and then finds the C-repair for each block separately, using the algorithm $\algname{FindCRep}$. Then, the algorithm uses an existing polynomial time algorithm for finding the maximum weight matching in a bipartite graph $G_{X_1||X_2}$~\cite{DBLP:journals/jacm/EdmondsK72}. This graph has a node on its left-hand side for each possible set of values $x$ such that $f[X_1]=x$ for some fact $f\in I$. Similarly, it has a node on its right-hand side for each possible set of values $y$ such that $f[X_2]=y$ for some fact $f\in I$. The weight of each edge $(x,y)$ is the number of facts that appear in a C-repair of the block $B_{x||y}$ (the block that contains all the facts $f$ such that $f[X_1]=x$ and $f[X_2]=y$). The algorithm returns the subinstance that correspond to this maximum weighted matching (that is, the subinstance that contains the C-repair of each block $B_{xy}$ such that the edge $(x,y)$ belongs to the maximum matching).

As long as there exists a simplifaction that can be applied to $(\signature,\depset)$, the algorithm $\algname{FindCRep}$ applies this simplification to the schema and calls the corresponding subroutine on the result. If not simplification can be applied to $(\signature,\depset)$, then $\algname{FindCRep}$ returns the instance $I$ itseld if $\depset=\emptyset$, or $\emptyset$ otherwise. In the following sections we will prove the correctness of the algorithm $\algname{FindCRep}$ and Finally we will prove Theorem~\ref{thm:main-maxsrep}.

{
\begin{algseries}{t}{\label{alg:findcrep-main} Finding a cardinality repair (main)}
\begin{insidealg}{FindCRep}{$\signature,\depset,I$}
\STATE remove trivial FDs and attributes from $\depset$
\IF{$\depset=\emptyset$}
\STATE \textbf{return} $I$
\ENDIF
\IF{simplification $1$ can be applied to $(\signature,\depset)$}
\STATE \textbf{return} $\algname{FindCRepS1}(\signature,\depset,I)$
\ENDIF
\IF{simplification $2$ can be applied to $(\signature,\depset)$}
\STATE \textbf{return} $\algname{FindCRepS2}(\signature,\depset,I)$
\ENDIF
\IF{simplification $3$ can be applied to $(\signature,\depset)$}
\STATE \textbf{return} $\algname{FindCRepS3}(\signature,\depset,I)$
\ENDIF
\STATE \textbf{return} $\emptyset$
\end{insidealg}
\end{algseries}

\begin{algseries}{t}{\label{alg:findcrep-sub} Finding a cardinality repair (subroutines)}
\begin{insidesub}{FindCRepS1}{$\signature,\depset,I$}
\STATE $V:=\set{v\mid\mbox{$f[A_i]=v$ for some $f\in I$}}$
\FORALL{$v\in V$}
\STATE $B_v:=\set{f\in I\mid f[A_i]=v}$
\STATE $M_v:=\algname{FindCRep}(\pi_{\overline{\set{A_i}}}(\signature),\pi_{\overline{\set{A_i}}}(\depset),B_v)$
\ENDFOR
\STATE \textbf{return} $\bigcup\limits_{v\in V} M_v$
\end{insidesub}
\begin{insidesub}{FindCRepS2}{$\signature,\depset,I$}
\STATE $V:=\set{x\mid\mbox{$f[X]=x$ for some $f\in I$}}$
\FORALL{$x\in V$}
\STATE $B_x:=\set{f\in I\mid f[X]=x}$
\STATE $M_x:=\algname{FindCRep}(\pi_{\overline{X}}(\signature),\pi_{\overline{X}}(\depset),B_x)$
\ENDFOR
\STATE \textbf{return} $\argmax\limits_{x\in V} |M_x|$
\end{insidesub}
\begin{insidesub}{FindCRepS3}{$\signature,\depset,I$}
\STATE $V:=\set{x||y\mid\mbox{$f[X_1]=x$ and $f[X_2]=y$ for some $f\in I$}}$
\FORALL{$x||y\in V$}
\STATE $B_{xy}:=\set{f\in I\mid f[X_1]=x,f[X_2]=y}$
\STATE $M_{xy}:=\algname{FindCRep}(\pi_{\overline{X_1\cup X_2}}(\signature),\pi_{\overline{X_1\cup X_2}}(\depset),B_{xy})$
\STATE $w_{xy}:=|M_{xy}|$
\ENDFOR
\STATE $W:=\algname{FindMaxWeightMatch}(G_{X_1||X_2})$
\STATE $J:=\emptyset$
\FORALL{matches $\set{x,y}\in W$}
\STATE $J:=J\cup M_{xy}$
\ENDFOR
\STATE \textbf{return} $J$
\end{insidesub}
\end{algseries}
}

\section{Tractability Side}~\label{sec:tractability}

In this section, we prove, for each one of the three simplifications, that if the problem of finding a C-repair can be solved in polynomial time, using the algorithm $\algname{FindCRep}$, after applying the simplification to a schema $(\signature,\depset)$, then it can also be solved in polynomial time for the original schema $(\signature,\depset)$. More formally, we prove the following three lemmas.

\begin{lemma}\label{lemma:s1-ptime}
Let $(\signature,\depset)$ be an FD schema, such that simplification $1$ can be applied to $(\signature,\depset)$. Let $I$ be an instance of $\signature$. If $\maxsrepfd{\pi_{\overline{\set{A_i}}}(\signature)}{\pi_{\overline{\set{A_i}}}(\depset)}$ can be solved in polynomial time using $\algname{FindCRep}$, the problem $\maxsrepfd{\signature}{\depset}$ can be solved in polynomial time using $\algname{FindCRep}$ as well.
\end{lemma}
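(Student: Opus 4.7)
The plan is to argue that Simplification~1 decomposes the C-repair problem into independent subproblems, one per value of $A_i$, and that these subproblems coincide with C-repair problems over the reduced schema $(\pi_{\overline{\set{A_i}}}(\signature),\pi_{\overline{\set{A_i}}}(\depset))$. The subroutine $\algname{FindCRepS1}$ is exactly this decomposition, so correctness plus polynomial running time will follow.

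First I would prove the key independence claim: if $f,g\in I$ satisfy $f[A_i]\neq g[A_i]$, then $\{f,g\}$ does not violate any FD of $\depset$. This is immediate, since by the assumption of Simplification~1 every FD in $\depset$ has $A_i$ on its left-hand side, and two facts disagreeing on an LHS attribute trivially satisfy the FD. Consequently, for any subinstance $J\subseteq I$, writing $B_v=\{f\in I\mid f[A_i]=v\}$ and $J_v=J\cap B_v$, we have $J$ satisfies $\depset$ if and only if each $J_v$ satisfies $\depset$. From this it follows that $|J|$ is maximized among consistent subinstances of $I$ exactly when each $|J_v|$ is maximized among consistent subinstances of $B_v$; so $J=\bigcup_v J_v$ is a C-repair of $I$ iff each $J_v$ is a C-repair of $B_v$ (under $\depset$).

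Next I would show that the per-block problem is the same as a C-repair problem under the simplified schema. Since every fact in $B_v$ has the same value on $A_i$, the FD $X\rightarrow Y$ is violated by two facts in $B_v$ if and only if $\pi_{\overline{\set{A_i}}}(X\rightarrow Y)=(X\setminus\{A_i\})\rightarrow(Y\setminus\{A_i\})$ is violated by the corresponding $A_i$-free projections; any trivial FDs produced may be removed without changing the consistent subinstances. Hence the C-repairs of $B_v$ under $(\signature,\depset)$ are in bijection (via projection on $A_i$) with the C-repairs of the projected block under $(\pi_{\overline{\set{A_i}}}(\signature),\pi_{\overline{\set{A_i}}}(\depset))$, with the same cardinalities.

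Combining these two observations, $\algname{FindCRepS1}$ returns a correct C-repair of $I$ whenever each recursive call $\algname{FindCRep}(\pi_{\overline{\set{A_i}}}(\signature),\pi_{\overline{\set{A_i}}}(\depset),B_v)$ does. For running time, note that the blocks $B_v$ partition $I$, so $\sum_v|B_v|=|I|$; by the hypothesis that $\maxsrepfd{\pi_{\overline{\set{A_i}}}(\signature)}{\pi_{\overline{\set{A_i}}}(\depset)}$ is solved in polynomial time by $\algname{FindCRep}$, each recursive call takes time polynomial in $|B_v|$, and summing over $v$ yields a polynomial total. The main (mild) obstacle is the projection bookkeeping in the per-block equivalence: one must check that removing $A_i$ from both the facts and the FDs preserves exactly the set of conflicts within a block, including the case where $X=\{A_i\}$ (in which case the projected FD becomes $\emptyset\rightarrow(Y\setminus\{A_i\})$ and is satisfied by $B_v$ precisely because all its facts agree on $Y\cap\{A_i\}=\{A_i\}$ and the remaining constraints must be enforced on the rest of $Y$), and that trivial FDs or attributes generated by the projection are harmlessly discarded by the first line of $\algname{FindCRep}$.
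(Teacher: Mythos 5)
Your proposal is correct and follows essentially the same approach as the paper's proof: both rest on the observation that, since every FD has $A_i$ on its left-hand side, facts in different $A_i$-blocks never conflict, and within a block an FD is violated exactly when its $A_i$-free projection is violated, so the C-repair of $I$ is the union of per-block C-repairs computed by the recursive calls. Your packaging as an explicit independence lemma plus a per-block equivalence (rather than the paper's two direct arguments by contradiction for consistency and maximality of the returned set) is only a cosmetic difference.
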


\begin{proof}
Assume that $\maxsrepfd{\pi_{\overline{\set{A_i}}}(\signature)}{\pi_{\overline{\set{A_i}}}(\depset)}$ can be solved in polynomial time using $\algname{FindCRep}$. That is, for each $J$, the algorithm $\algname{FindCRep}(\pi_{\overline{\set{A_i}}}(\signature),\pi_{\overline{\set{A_i}}}(\depset),J)$ returns a C-repair of $J$. We contend that $\maxsrepfd{\signature}{\depset}$ can also be solved in polynomial time using $\algname{FindCRep}$. Since the condition of line~4 of $\algname{FindCRep}$ is satisfied, the algorithm will call subroutine $\algname{FindCRepS1}$ and return the result. Thus, we have to prove that $\algname{FindCRepS1}(\signature,\depset,I)$ returns a C-repair of $I$.

Let $J$ be the result of $\algname{FindCRepS1}(\signature,\depset,I)$. We will start by proving that $J$ is consistent. Let us assume, by way of contradiction, that $J$ is not consistent. Thus, there are two facts $f_1$ and $f_2$ in $J$ that violate an FD $Z\rightarrow W$ in $\depset$. Since $A_i\in Z$, $f_1$ and $f_2$ agree on the value of attribute $A_i$, thus they belong to the same block $B_v$. By definition, there is an FD $(Z\setminus\set{A_i)}\rightarrow (W\setminus\set{A_i})$ in $\pi_{\overline{\set{A_i}}}(\depset)$. Clearly, the facts $f_1$ and $f_2$ agree on all the attributes in $Z\setminus\set{A_i}$, and since they also agree on the attribute $A_i$, there exists an attribute $B\in (W\setminus\set{A_i})$ such that $f_1[B]\neq f_2[B]$. Thus, $f_1$ and $f_2$ violate an FD in $\pi_{\overline{\set{A_i}}}(\depset)$, which is a contradiction to the fact that $\algname{FindCRep}(\pi_{\overline{\set{A_i}}}(\signature),\pi_{\overline{\set{A_i}}}(\depset),B_v)$ returns a C-repair of $B_v$ that contains both $f_1$ and $f_2$.

Next, we will prove that $J$ is a C-repair of $I$. Let us assume, by way of contradiction, that this is not the case. That is, there is another subset repair $J'$ of $I$, such that $J'$ contains more facts than $J$. In this case, there exists at least one value $v$ of attribute $A_i$, such that $J'$ contains more facts $f$ for which it holds that $f[A_i]=v$ than $J$. Let $\set{f_1,\dots,f_n}$ be the set of facts from $I$ for which it holds that $f[A_i]=v$ that appear in $J$, and let $\set{f_1,\dots,f_m}$ be the set of facts from $I$ for which it holds that $f[A_i]=v$ that appear in $J'$. It holds that $m>n$. We claim that $\set{f_1,\dots,f_m}$ is a subset repair of $B_v$, which is a contradiction to the fact that $\set{f_1,\dots,f_n}$ is a C-repair (that is, a C-repair) of $B_v$. Let us assume, by way of contradiction, that $\set{f_1,\dots,f_m}$ is not a subset repair of $B_v$. Thus, there exist two facts $f_{j_1}$ and $f_{j_2}$ in $\set{f_1,\dots,f_m}$ that violate an FD, $Z\rightarrow W$, in $\pi_{\overline{\set{A_i}}}(\depset)$. By definition, there is an FD $(Z\cup \set{A_i})\rightarrow (W\cup\set{A_i})$ in $\depset$, and since $f_{j_1}$ and $f_{j_2}$ agree on the value of attribute $A_i$, they clearly violate this FD, which is a contradiction to the fact that they both appear in $J'$ (which is a subset repair of $I$).

Clearly, if the the algorithm $\algname{FindCRep}$ solves the problem $\maxsrepfd{\pi_{\overline{\set{A_i}}}(\signature)}{\pi_{\overline{\set{A_i}}}(\depset)}$ in polynomial time, then it also solves the problem $\maxsrepfd{\signature}{\depset}$ in polynomial time, and that concludes our proof of the lemma.
\end{proof}

\begin{lemma}\label{lemma:s2-ptime}
Let $(\signature,\depset)$ be an FD schema, such that simplification $2$ can be applied to $(\signature,\depset)$. Let $I$ be an instance of $\signature$. If $\maxsrepfd{\pi_{\overline{X}}(\signature)}{\pi_{\overline{X}}(\depset)}$ can be solved in polynomial time using $\algname{FindCRep}$, the problem $\maxsrepfd{\signature}{\depset}$ can be solved in polynomial time using $\algname{FindCRep}$ as well.
\end{lemma}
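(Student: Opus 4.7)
The plan is to mirror the structure of the proof of Lemma~\ref{lemma:s1-ptime}, but adapted to the fact that Simplification~2 arises from a ``global'' constraint $\emptyset\rightarrow X$, which forces a C-repair to live inside a single block rather than being the union of per-block repairs. Assuming $\algname{FindCRep}$ solves $\maxsrepfd{\pi_{\overline{X}}(\signature)}{\pi_{\overline{X}}(\depset)}$ in polynomial time, I must show that $\algname{FindCRepS2}(\signature,\depset,I)$ returns a C-repair of $I$ in polynomial time.

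First I would establish the algebraic bridge between the original and simplified schemas: for every value $x\in V$ and every $J\subseteq B_x$, it holds that $J$ satisfies $\depset$ if and only if $J$ satisfies $\pi_{\overline{X}}(\depset)$. The nontrivial direction in each case uses that, within a single block $B_x$, all facts agree on every attribute of $X$. So if two facts $f_1,f_2\in J$ violate some FD $Z\rightarrow W$ in $\depset$, they disagree on some attribute of $W\setminus X$ (they cannot disagree on an attribute of $X$), while agreeing on $Z\setminus X$; this is precisely a violation of $\pi_{\overline{X}}(Z\rightarrow W)$ in $\pi_{\overline{X}}(\depset)$. The converse direction is symmetric, and the FD $\emptyset\rightarrow X$ itself is trivially satisfied within a block.

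Next I would prove consistency of the returned subinstance. Write $J=M_{x^\ast}$ where $x^\ast\in\argmax_{x\in V}|M_x|$. By hypothesis, $M_{x^\ast}$ is a C-repair of $B_{x^\ast}$ with respect to $\pi_{\overline{X}}(\depset)$, hence $M_{x^\ast}\subseteq B_{x^\ast}$ is consistent with $\pi_{\overline{X}}(\depset)$; the algebraic bridge then converts this into consistency with $\depset$. For cardinality maximality, let $J'$ be any consistent subinstance of $I$ with respect to $\depset$. The FD $\emptyset\rightarrow X\in\depset$ forces all facts of $J'$ to agree on $X$, so $J'\subseteq B_{x'}$ for a single $x'\in V$. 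By the algebraic bridge, $J'$ is consistent with $\pi_{\overline{X}}(\depset)$, so $|J'|\le |M_{x'}|\le \max_{x\in V}|M_x|=|J|$, proving that $J$ is a C-repair.

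The polynomial-time bound is then inherited: the number of blocks is at most $|I|$, each recursive call is on the strictly simpler schema $(\pi_{\overline{X}}(\signature),\pi_{\overline{X}}(\depset))$ (runs in polynomial time by hypothesis), and the final $\argmax$ is linear. I do not expect a substantive obstacle here; the only delicate point is bookkeeping. One must be careful that after projection, trivial FDs (including the case $\emptyset\rightarrow\emptyset$ produced when $X=Z$ or $W\subseteq X$) are correctly cleaned up on line~1 of $\algname{FindCRep}$, so that the recursive hypothesis is meaningfully applicable, and that ``consistent subinstance of $I$'' is correctly understood as forcing all surviving facts to share a common $X$-value via $\emptyset\rightarrow X$.
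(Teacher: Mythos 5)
Your proposal is correct and follows essentially the same route as the paper: partition $I$ into blocks by the $X$-value, use the fact that within a block consistency w.r.t.\ $\depset$ coincides with consistency w.r.t.\ $\pi_{\overline{X}}(\depset)$, observe that $\emptyset\rightarrow X$ confines any repair to a single block, and conclude via the $\argmax$ over per-block C-repairs. Your explicit two-directional ``bridge'' lemma is just a cleaner packaging of the paper's consistency and maximality arguments; the only detail the paper adds that you omit is the check that Simplification~1's branch (line~4 of $\algname{FindCRep}$) cannot fire, so that $\algname{FindCRepS2}$ is indeed the subroutine invoked.
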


\begin{proof}
Assume that $\maxsrepfd{\pi_{\overline{X}}(\signature)}{\pi_{\overline{X}}(\depset)}$ can be solved in polynomial time using $\algname{FindCRep}$. That is, for each $J$, the algorithm $\algname{FindCRep}(\pi_{\overline{X}}(\signature),\pi_{\overline{X}}(\depset),J)$ returns a C-repair of $J$. We contend that $\maxsrepfd{\signature}{\depset}$ can also be solved in polynomial time using $\algname{FindCRep}$. Note that the condition of line~4 cannot be satisfied, since there is no attribute that appears on the left-hand side of $\emptyset\rightarrow X$. Since the condition of line~6 of $\algname{FindCRep}$ is satisfied, the algorithm will call subroutine $\algname{FindCRepS2}$ and return the result. Thus, we have to prove that $\algname{FindCRepS2}(\signature,\depset,I)$ returns a C-repair of $I$.

Let $J$ be the result of $\algname{FindCRepS2}(\signature,\depset,I)$. We will start by proving that $J$ is not consistent. Thus, there are two facts $f_1$ and $f_2$ in $J$ that violate an FD $Z\rightarrow W$ in $\depset$. That is, $f_1$ and $f_2$ agree on all the attributes in $Z$, but do not agree on at least one attribute $B\in W$. Note that $f_1$ and $f_2$ agree on all the attributes in $X$ (since $\algname{FindCRepS2}(\signature,\depset,I)$ always returns a set of facts that belong to a single block). Thus, it holds that $B\not\in X$. By definition, there is an FD $(Z\setminus X)\rightarrow (W\setminus X)$ in $\pi_{\overline{X}}(\depset)$. Clearly, the facts $f_1$ and $f_2$ agree on all the attributes in $Z\setminus X$, but do not agree on the attribute $B\in (W\setminus X)$. Thus, $f_1$ and $f_2$ violate an FD in $\pi_{\overline{X}}(\depset)$, which is a contradiction to the fact that $\algname{FindCRep}(\pi_{\overline{X}}(\signature),\pi_{\overline{X}}(\depset),B_x)$ returns a C-repair of $B_x$ that contains both $f_1$ and $f_2$.

Next, we will prove that $J$ is a C-repair of $I$. Let us assume, by way of contradiction, that this is not the case. That is, there is another subset repair $J'$ of $I$, such that $J'$ contains more facts than $J$. Clearly, each subset repair of $I$ only contains facts that belong to a single block $B_x$ (since the FD $\emptyset\rightarrow X$ implies that all the facts must agree on the values of all the attributes in $X$). The instance $J$ is a C-repair of some block $B_x$. If $J'\subseteq B_x$, then we get a contradiction to the fact that $J$ is a C-repair of $B_x$. Thus, $J'$ contains facts from another block $B_{x'}$. In this case, the C-repair of $B_{x'}$ contains more facts than the C-repair of $B_x$, which is a contradiction to the fact that no block has a C-repair that contains more facts than $B_x$ does.

Clearly, if the the algorithm $\algname{FindCRep}$ solves the problem $\maxsrepfd{\pi_{\overline{X}}(\signature)}{\pi_{\overline{X}}(\depset)}$ in polynomial time, then it also solves the problem $\maxsrepfd{\signature}{\depset}$ in polynomial time, and that concludes our proof of the lemma.
\end{proof}

\begin{lemma}\label{lemma:s3-ptime}
Let $(\signature,\depset)$ be an FD schema, such that simplification $3$ can be applied to $(\signature,\depset)$. Let $I$ be an instance of $\signature$. If $\maxsrepfd{\pi_{\overline{X_1\cup X_2}}(\signature)}{\pi_{\overline{X_1\cup X_2}}(\depset)}$ can be solved in polynomial time using $\algname{FindCRep}$, then $\maxsrepfd{\signature}{\depset}$ can be solved in polynomial time using $\algname{FindCRep}$ as well.
\end{lemma}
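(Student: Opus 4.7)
My plan is to mirror the structure of the proofs of Lemmas~\ref{lemma:s1-ptime} and~\ref{lemma:s2-ptime}. Under the inductive hypothesis that $\algname{FindCRep}$ correctly solves the projected problem $\maxsrepfd{\pi_{\overline{X_1\cup X_2}}(\signature)}{\pi_{\overline{X_1\cup X_2}}(\depset)}$ in polynomial time on every input, I would argue that the call $\algname{FindCRepS3}(\signature,\depset,I)$ returns a cardinality repair $J$ of $I$ in polynomial time. First I would verify that the earlier branches of $\algname{FindCRep}$ (lines~4 and~6) do not apply here, so that execution actually reaches the Simplification~3 branch; this is forced by the assumption that Simplification~3 is the applicable step and by the preconditions checked in order.

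For the consistency part, I would assume toward a contradiction that two facts $f_1,f_2\in J$ violate some FD $Z\rightarrow W'$ in $\depset$. By the defining condition of Simplification~3, either $X_1\subseteq Z$ or $X_2\subseteq Z$. If $f_1,f_2$ both belong to the same $M_{xy}$, they agree on all of $X_1\cup X_2$, and the projected FD $\pi_{\overline{X_1\cup X_2}}(Z\rightarrow W')$ is violated inside $M_{xy}$, contradicting that $M_{xy}$ is a C-repair of $B_{xy}$ under the projected schema; the correspondence between consistency of a subset of $B_{xy}$ with respect to $\depset$ and with respect to $\pi_{\overline{X_1\cup X_2}}(\depset)$ is the routine check used in Lemma~\ref{lemma:s1-ptime}. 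Otherwise $f_1,f_2$ lie in distinct blocks $M_{x_1 y_1}$ and $M_{x_2 y_2}$ whose edges both appear in the matching $W$; the matching property of $W$ forces $x_1\neq x_2$ and $y_1\neq y_2$, so $f_1$ and $f_2$ disagree on some attribute of $X_1$ and on some attribute of $X_2$, hence they cannot agree on $Z$.

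For optimality, let $J'$ be an arbitrary subset repair of $I$ and set $J'_{xy}:=J'\cap B_{xy}$. The key structural claim I plan to prove is that the non-empty blocks $J'_{xy}$ correspond to a matching in $G_{X_1\|X_2}$: if two facts of $J'$ shared the same $X_1$-value but had different $X_2$-values, the FD $X_1\rightarrow Y_1$ combined with $X_2\subseteq Y_1$ would propagate agreement through $Y_1$ and force equality on $X_2$, a contradiction; the symmetric argument on the $X_2$ side uses $X_2\rightarrow Y_2$ and $X_1\subseteq Y_2$. Within each block, $J'_{xy}$ is consistent with the projected schema, so $|J'_{xy}|\leq |M_{xy}|=w_{xy}$ by the inductive hypothesis. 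Summing over the induced matching and upper-bounding by the maximum-weight matching $W$ produced by $\algname{FindMaxWeightMatch}$ gives $|J'|\leq \sum w_{xy}\leq |J|$. Polynomial running time then follows because there are at most $|I|$ blocks, each inductive call is polynomial by hypothesis, and maximum-weight bipartite matching is in $\Ptime$.

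The main obstacle is the structural claim that subset repairs of $I$ induce matchings in $G_{X_1\|X_2}$: this is the step where Simplification~3 genuinely diverges from the earlier two. The argument crucially uses both mutual-inclusion conditions $X_1\subseteq Y_2$ and $X_2\subseteq Y_1$ (not just the covering condition on all other FDs), and the propagation of agreement passes through the entire right-hand side of an FD before being restricted back to $X_1$ or $X_2$. Once this combinatorial correspondence between subset repairs and weighted matchings is in place, the remainder of the proof is analogous to the treatments of Simplifications~1 and~2.
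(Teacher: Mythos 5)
Your proposal is correct and follows essentially the same route as the paper: the same case analysis for consistency, and the same key observation that the conditions $X_1\subseteq Y_2$ and $X_2\subseteq Y_1$ force every subset repair to induce a matching in $G_{X_1\|X_2}$, whose weight is then dominated by the maximum-weight matching (your direct bound $|J'_{xy}|\leq w_{xy}$ replaces the paper's equivalent replacement argument). One small caveat: the claim that lines~4 and~6 of $\algname{FindCRep}$ cannot fire is not ``forced by the preconditions'' as you assert --- it needs the short argument the paper gives, namely that a shared left-hand attribute of $X_1$ and $X_2$ would survive the removal of trivial attributes and contradict $X_1\subseteq Y_2$, and that $X_1,X_2\neq\emptyset$ rules out Simplification~2.
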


\begin{proof}
Assume that $\maxsrepfd{\pi_{\overline{X_1\cup X_2}}(\signature)}{\pi_{\overline{X_1\cup X_2}}(\depset)}$ can be solved in polynomial time using $\algname{FindCRep}$. That is, $\algname{FindCRep}(\pi_{\overline{X_1\cup X_2}}(\signature),\pi_{\overline{X_1\cup X_2}}(\depset),J)$ returns a C-repair of $J$ for each $J$. We contend that $\maxsrepfd{\signature}{\depset}$ can also be solved in polynomial time using $\algname{FindCRep}$. Note that the condition of line~4 cannot be satisfied. Otherwise, there is an attribute $A_i$ that appears on the left-hand side of both $X_1\rightarrow Y_1$ and $X_2\rightarrow Y_2$. Since we always remove redundant attributes from the FDs in $\depset$ before calling $\algname{FindCRep}$, the attribute $A_i$ does not appear on the right-hand side of these FDs, and it does not hold that $X_1\subseteq Y_2$, which is a contradiction to the fact that simplification $3$ can be applied to $(\signature,\depset)$. The condition of line~6 cannot be satisfied as well, since neither $X_1\subseteq\emptyset$ nor $X_2\subseteq\emptyset$. The condition of line~8 on the other hand is satisfied, thus the algorithm will call subroutine $\algname{FindCRepS3}$ and return the result. Thus, we have to prove that $\algname{FindCRepS3}(\signature,\depset,I)$ returns a C-repair of $I$.

Let us denote by $J$ the result of $\algname{FindCRepS3}(\signature,\depset,I)$. We will start by proving that $J$ is consistent. Let $f_1$ and $f_2$ be two FDs in $I$. Note that it cannot be the case that $f_1[X_1]\neq f_2[X_1]$ but $f_1[X_2]= f_2[X_2]$ (or vice versa), since in this case the matching that we found for $G_{X_I||X_2}$ contains two edges $(x_1,y)$ and $(x_2,y)$, which is impossible. Moreover, if it holds that $f_1[X_1]\neq f_2[X_1]$ and $f_1[X_2]\neq f_2[X_2]$, then $f_1$ and $f_2$ do not agree on the left-hand side of any FD in $\depset$ (since we assumed that for each FD $Z\rightarrow W$ in $\depset$ it either holds that $X_1\subseteq Z$ or $X_2\subseteq Z$). Thus, $\set{f_1,f_2}$ satisfies all the FDs in $\depset$. Now, let us assume, by way of contradiction, that $J$ is not consistent. Thus, there are two facts $f_1$ and $f_2$ in $J$ that violate an FD $Z\rightarrow W$ in $\depset$. That is, $f_1$ and $f_2$ agree on all the attributes in $Z$, but do not agree on at least one attribute $B\in W$. As mentioned above, the only possible case is that $f_1[X_1]=f_2[X_1]=x$ and $f_1[X_2]=f_2[X_2]=y$. In this case, $f_1$ and $f_2$ belong to the same block $B_{xy}$, and they do not agree on an attribute $B\in (W\setminus(X_1\cup X_2))$. The FD $(Z\setminus(X_1\cup X_2)\rightarrow (W\setminus(X_1\cup X_2))$ belongs to $\pi_{\overline{X_1\cup X_2}}(\depset)$, and clearly $f_1$ and $f_2$ also vioalte this FD, which is a contradiction to the fact that $J$ only contains a C-repair of $B_{xy}$ and does not contain any other facts from this block.

Next, we will prove that $J$ is a C-repair of $I$. Let us assume, by way of contradiction, that this is not the case. That is, there is another subset repair $J'$ of $I$, such that $J'$ contains more facts than $J$. Note that the weight of the matching corresponding to $J$ is the total number of facts in $J$ (since the weight of each edge $(x,y)$ is the number of facts in the C-repair of the block $B_{x||y}$, and $J$ contains the C-repair of each block $B_{x||y}$, such that the edge $(x,y)$ belongs to the matching). Let $f_1$ and $f_2$ be two facts in $J'$. Note that it cannot be the case that that $f_1[X_1]= f_2[X_1]$ but $f_1[X_2]\neq f_2[X_2]$, since in this case, $\set{f_1,f_2}$ violates the FD $X_1\rightarrow Y_1$ (we recall that $X_2\subseteq Y_1$, thus the fact that $f_1[X_2]\neq f_2[X_2]$ implies that $f_1[Y_1]\neq f_2[Y_1]$). Hence, it either holds that $f_1[X_1]= f_2[X_1]$ and $f_1[X_2]= f_2[X_2]$ or $f_1[X_1]\neq f_2[X_1]$ and $f_1[X_2]\neq f_2[X_2]$. Therefore, $J'$ clearly corresponds to a matching of $G_{X_1||X_2}$ as well (the matching will contain an edge $(x,y)$ if there is a fact $f\in J'$, such that $f[X]=x$ and $f[Y]=y$).

Next, we claim that for each edge $(x,y)$ that belongs to the above matching, the subinstance $J'$ contains a C-repair of the block $B_{xy}$ w.r.t. $\pi_{\overline{X_1\cup X_2}}(\depset)$. Clearly, $J'$ cannot contain two facts $f_1$ and $f_2$ from $B_{xy}$ that violate an FD $Z\rightarrow W$ from $\pi_{\overline{X_1\cup X_2}}(\depset)$ (otherwise, $f_1$ and $f_2$ will also violate the FD $(Z\cup X_1\cup X_2)\rightarrow (W\cup X_1\cup X_2)$ from $\depset$, which is a contradiction to the fact that $J'$ is a subset repair of $I$). Thus, $J'$ contains a consistent set of facts from $B_{xy}$. If this set of facts is not a C-repair of $B_{xy}$, then we can replace this set of facts with a C-repair of $B_{xy}$. This will not break the consistency of $J'$ since these facts do not agree on the attributes in neither $X_1$ nor $X_2$ with any other fact in $J'$, and each FD $Z\rightarrow W$ in $\depset$ is such that $X_1\subseteq Z$ or $X_2\subseteq Z$. The result will be a repair of $I$ that contains more facts than $J'$, which is a contradiction to the fact that $J'$ is a C-repair of $I$. Therefore, for each edge $(x,y)$ that belongs to the above matching, $J'$ contains exactly $w_{xy}$ facts, which means that the weight of this matching is the total number of facts in $J'$. In this case, we found a matching of $G_{X_1||X_2}$ with a higher weight than the matching corresponding to $J$, which is a contradiction to the fact that $J$ corresponds to the maximum weighted matching of $G_{X_1||X_2}$. 

Clearly, if the the algorithm $\algname{FindCRep}$ solves the problem $\maxsrepfd{\pi_{\overline{X_1\cup X_2}}(\signature)}{\pi_{\overline{X_1\cup X_2}}(\depset)}$ in polynomial time, it also solves the problem $\maxsrepfd{\signature}{\depset}$ in polynomial time, and that concludes our proof of the lemma.
\end{proof}

\section{Hardness Side}
Our proof of hardness is based on the concept of a \e{fact-wise
  reduction}~\cite{DBLP:conf/pods/Kimelfeld12}.  Let
$(\signature,\depset)$ and $(\signature',\depset')$ be two FD
schemas. A \e{mapping} from $\signature$ to $\signature'$ is a
function $\mu$ that maps facts over $\signature$ to facts over
$\signature'$. We naturally extend a mapping $\mu$ to map instances
$I$ over $\signature$ to instances over $\signature'$ by defining
$\mu(I)$ to be $\set{\mu(f)\mid f\in I}$.  A \e{fact-wise reduction}
from $(\signature,\depset)$ to $(\signature',\depset')$ is a mapping
$\Pi$ from $\signature$ to $\signature'$ with the following
properties.
\begin{enumerate}
\item $\Pi$ is injective; that is, for all facts $f$ and $g$ over
  $\signature$, if $\Pi(f) = \Pi(g)$ then $f = g$.
\item $\Pi$ preserves consistency and inconsistency; that is, for
  every instance $I$ over $\signature$, the instance $\Pi(I)$
  satisfies $\depset'$ if and only if $I$ satisfies $\depset$.
\item $\Pi$ is computable in polynomial time.
\end{enumerate}

The following lemma is straightforward.
\begin{lemma}\label{lemma:fact-wise-sharpp}
  Let $(\signature,\depset)$ and $(\signature',\depset')$ be FD
  schemas, and suppose that there is a fact-wise reduction from
  $(\signature,\depset)$ to $(\signature',\depset')$. If the problem
  $\maxsrepfd{\signature}{\depset}$ is NP-hard, then so is
  $\maxsrepfd{\signature'}{\depset'}$.
\end{lemma}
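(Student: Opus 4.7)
The plan is to show that a polynomial-time algorithm for $\maxsrepfd{\signature'}{\depset'}$ would yield one for $\maxsrepfd{\signature}{\depset}$, so that the assumed NP-hardness transfers. The key observation is that the fact-wise reduction $\Pi$ induces a cardinality-preserving bijection between consistent subinstances of $I$ and consistent subinstances of $\Pi(I)$.

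First, I would make this bijection precise. Because $\Pi$ is injective on facts, the map $J \mapsto \Pi(J)$ is a bijection between subinstances $J \subseteq I$ and subinstances $K \subseteq \Pi(I)$, with inverse $K \mapsto \Pi^{-1}(K)$; in particular $|\Pi(J)| = |J|$. The consistency-preservation property of $\Pi$ then ensures that $J$ satisfies $\depset$ if and only if $\Pi(J)$ satisfies $\depset'$. Hence the restriction of this bijection to consistent subinstances is still cardinality-preserving.

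Next, I would use the bijection to move C-repairs back and forth. If $K^*$ is a C-repair of $\Pi(I)$, then $\Pi^{-1}(K^*)$ is a consistent subinstance of $I$ of the same size; any strictly larger consistent $J^\circ \subseteq I$ would give a strictly larger consistent $\Pi(J^\circ) \subseteq \Pi(I)$, contradicting the maximality of $K^*$. So $\Pi^{-1}(K^*)$ is a C-repair of $I$. Composing the steps yields a polynomial-time algorithm for $\maxsrepfd{\signature}{\depset}$ given one for $\maxsrepfd{\signature'}{\depset'}$: compute $\Pi(I)$ in polynomial time, invoke the assumed solver to obtain $K^*$, and return $\Pi^{-1}(K^*)$, which requires only inverting $\Pi$ on the polynomially many facts of $K^*$ (these can be inverted by recomputing $\Pi$ on each fact of $I$ once to build a lookup table).

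The main (mild) obstacle is a pedantic one: $\maxsrepfd{\cdot}{\cdot}$ is phrased as a search/computation task, and NP-hardness is a notion attached to decision problems. I would handle this by reducing the associated decision version, \textbf{is there a consistent subinstance of size at least $k$?}, from $(\signature,\depset)$ to $(\signature',\depset')$: map $(I,k)$ to $(\Pi(I),k)$, and use the cardinality-preserving bijection above to conclude that the answers agree. This gives a polynomial-time many-one reduction, so NP-hardness of the decision version over $(\signature,\depset)$ implies NP-hardness of the decision version over $(\signature',\depset')$, and hence NP-hardness of the computation problem $\maxsrepfd{\signature'}{\depset'}$ as stated.
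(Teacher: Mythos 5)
Your proof is correct and is precisely the standard argument the paper has in mind when it declares this lemma "straightforward" (the paper gives no explicit proof): injectivity of $\Pi$ yields a cardinality-preserving bijection between subinstances of $I$ and of $\Pi(I)$, consistency preservation restricts it to consistent subinstances, and hence the maximum sizes agree, so the decision version reduces via $(I,k)\mapsto(\Pi(I),k)$. Your handling of the search-versus-decision formality and of inverting $\Pi$ via a lookup table is a reasonable way to make the omitted details precise.
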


 We first prove the hardness of $\maxsrepfd(\signature,\depset)$ for all the schemas that appear in Table~\ref{table:special-schemas}. Then, we prove the existence of fact-wise reductions from these schemas to other schemas. We will use all of these results in our proof of correctness for the algorithm $\algname{FindCRep}$.

\subsection{Hard Schemas}

We start by proving that $\maxsrepfd{\signature}{\depset}$ is NP-hard for four specific FD schemas.

\begin{lemma}\label{lemma:abc-cb-hard}
  The problem $\maxsrepfd{\stk}{\dtk}$ is NP-hard.
\end{lemma}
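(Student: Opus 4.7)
My plan is to establish NP-hardness of $\maxsrepfd{\stk}{\dtk}$ by a direct polynomial reduction from a classical NP-hard problem; since this is the base hard schema, Lemma~\ref{lemma:fact-wise-sharpp} is not yet applicable, and I must construct an instance of $R(A,B,C)$ from scratch. The first step is to unpack the conflict structure under $\dtk = \{AB\to C,\, C\to B\}$: two facts conflict exactly when they share $(A,B)$ with different $C$-values, or share $C$ with different $B$-values. A natural source problem, given this two-way structure, is Maximum 3-Dimensional Matching (Max-3DM): given $T \subseteq X\times Y\times Z$, find the largest subfamily $M \subseteq T$ of pairwise coordinate-disjoint triples, which is well-known to be NP-hard.

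The base construction I have in mind maps each triple $(x,y,z)\in T$ to a fact $R(x,y,z)\in I$. Under this mapping, $AB\to C$ forbids keeping two triples sharing $(x,y)$ but differing in $z$, and $C\to B$ forbids keeping two triples sharing $z$ but differing in $y$, so some matching violations automatically become conflicts. However, two triples that share only $x$, share only $y$, or share $(y,z)$ with different $x$-values do not yet conflict, so the encoding is not tight. To close this gap, I would augment $I$ with \emph{gadget facts}: auxiliary tuples whose presence trades off against keeping multiple triples sharing a forbidden element, using $C\to B$ to block multiple kept triples that share only $y$ or $z$, and using chains of $AB\to C$ conflicts to block multiple kept triples sharing only $x$. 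The gadgets must be calibrated so that the size of a maximum C-repair equals $g(I) + |M^\star|$ for a polynomial-time computable $g$ depending only on the input, where $M^\star$ is a maximum matching of $T$.

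The main technical obstacle is precisely this calibration, i.e., showing that the gadget contribution is \emph{tight}. The proof would proceed in two directions: forward, given a matching $M$, one constructs a consistent subinstance of size $g(I)+|M|$ by taking the base facts for $M$ together with the optimal gadget configuration compatible with those choices; backward, given any C-repair $J$ of $I$, one argues via local exchange arguments that $J$ may be put into a canonical form whose base-fact portion is a matching of $T$, because any ``mixed'' repair that retains conflicting base facts must forfeit more gadget facts than it gains in base facts. The subtle part will be ruling out these mixed trades by a careful size accounting over each element of $X\cup Y\cup Z$; once this is in place, Max-3DM reduces to $\maxsrepfd{\stk}{\dtk}$ in polynomial time, completing the NP-hardness proof and providing the first hard schema from which subsequent hardness results (via Lemma~\ref{lemma:fact-wise-sharpp}) can be derived.
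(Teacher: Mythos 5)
Your proposal is not a proof: the entire difficulty is delegated to the gadget construction and its calibration, which you explicitly leave open (``the main technical obstacle is precisely this calibration''). Worse, there is reason to doubt the strategy as described can be carried out. In this setting a gadget fact can influence a base fact only by conflicting with it through $AB\rightarrow C$ or $C\rightarrow B$, and such a conflict imposes a \emph{per-fact} cost (keeping that base fact forfeits the gadgets that clash with it), not a \emph{per-pair} cost. But what you need to penalize is the joint event that two base facts sharing only the $x$-coordinate (or only $y$, or $(y,z)$ with distinct $x$) are both retained; a uniform per-fact penalty merely shifts the objective by a constant and does nothing to enforce disjointness. You would need gadgets whose loss is triggered specifically by pairs of kept base facts, and it is not at all clear how to realize that with only these two FDs. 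Until you exhibit concrete gadgets and verify both directions of the size accounting, the reduction from Max-3DM does not exist.

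For comparison, the paper avoids gadgets entirely by choosing a source problem whose constraints align exactly with the conflict structure of $\dtk$: satisfiability of non-mixed CNF (every clause all-positive or all-negative). Each occurrence of $x_i$ in clause $c_j$ becomes the fact $\rtk(c_j,b,x_i)$ where $b$ is the sign of the clause. Then $AB\rightarrow C$ says a repair keeps at most one literal per clause (all facts of a clause share $A=c_j$ and $B=b$), and $C\rightarrow B$ says the kept literals assign each variable a single truth value; hence the C-repair has $m$ facts iff $\psi$ is satisfiable. If you want to salvage your attempt, the lesson is to pick a source problem whose two kinds of constraints are directly expressible by the two FDs, rather than forcing a three-way disjointness condition onto a schema that only provides two.
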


\begin{proof}
We construct a reduction from non-mixed CNF satisfiability to $\maxsrepfd{\stk}{\dtk}$.  The input to the first problem is a formula $\psi$ with the free variables 
 $x_1,\ldots,x_n$, such that  $\psi$
 has the form $c_1 \wedge  \cdots \wedge c_m$ where each $c_j$ is a clause. Each clause is a conjunction of variables from one of the following sets: \e{(a)} $\{x_i : i=1,\ldots,n\}$ or \e{(b)} $\{\neg x_i : i=1,\ldots,n\}$ (that is, each clause either contains only positive variables or only negative variables).
 The goal is to determine 
 if there exists an assignment
 $\tau: \{x_1,\ldots,x_n\} \rightarrow \{0,1\}$ that satisfies $\psi$. 
 Given such an input, 
 we will construct the input 
 $I$ for our problem as follows.
 For each $i=1,\ldots, n$ and $j=1,\ldots, m$, $I$ will contain the following facts:
 \begin{itemize}
 \item
 $\rtk(c_j,\val{1},x_i)$, if $c_j$ contains only positive variables and $x_i$ appears in $c_j$.
 \item
$\rtk(c_j,\val{0},x_i)$, if $c_j$ contains only negative variables and $\neg x_i$ appears in $c_j$.
 \end{itemize}
 We will now prove that there exists a satisfying assignment to $\psi$ if and only if the C-repair of $I$ contains exactly $m$ facts. 
 
\paragraph*{The ``if'' direction}
Assume that a C-repair $J$ of $I$ contains exactly $m$ facts. The FD $AB\rightarrow C$ implies that no subset repair  of $I$ contains two facts $\rtk(c_j,b_j,x_{i_1})$ and $\rtk(c_j,b_j,x_{i_2})$ such that $x_{i_1}\neq x_{i_2}$. Thus, each subset repair contains at most one fact $\rtk(c_j,b_j,x_i)$ for each $c_j$. Since $J$ contains exactly $m$ facts, it contains precisely one fact $\rtk(c_j,b_j,x_i)$ for each $c_j$. We will now define an assignment $\tau$ as follows: $\tau(x_i)\eqdef b_j$ if there exists a fact $\rtk(c_j,b_j,x_i)$ in $J$ for some $c_j$. Note that the FD $C\rightarrow B$ implies that no subset repair contains two facts $\rtk(c_{j_1},\val{1},x_i)$ and $\rtk(c_{j_2},\val{0},x_i)$, thus the assignment is well defined. Finally, sa mentioned above, $J$ contains a fact $\rtk(c_j,b_j,x_i)$ for each $c_j$. If $x_i$ appears in $c_j$ without negation, it holds that $b_1=1$, thus $\tau(x_i)\eqdef 1$ and $c_j$ is satisfied. Similarly, if $x_i$ appears in $c_j$ with negation, it holds that $b_j=0$, thus $\tau(x_i)\eqdef 0$ and $c_j$ is satisfied. Thus, each clause $c_j$ is satisfied by $\tau$ and we conclude that $\tau$ is a satisfying assingment of $\psi$.

\paragraph*{The ``only if'' direction}
Assume that $\tau: \{x_1,\ldots,x_n\} \rightarrow \{0,1\}$ is an assignment that satisfies $\psi$.
 We claim that the C-repair of $I$ containts exactly $m$ facts. Since $\tau$ is a satisfying assignment, for each clause $c_j$ there exists a variable $x_i\in c_j$, such that $\tau(x_i)=1$ if $x_i$ appears in $c_j$ without negation or $\tau(x_i)=0$ if it appears in $c_j$ with negation. Let us build an instance $J$ as follows. For each $c_j$ we will choose exactly one variable $x_i$ that satisfies the above and add the fact $\rtk(c_j,b_j,x_i)$ (where $\tau(x_i)=b_j$) to $J$. Since there are $m$ clauses, $J$ will contain exactly $m$ facts, thus it is only left to prove that $J$ is a subset repair. Let us assume, by way of contradiction, that $J$ is not a subset repair. As mentioned above, each subset repair can contain at most one fact $\rtk(c_j,b_j,x_i)$ for each $c_j$, thus $J$ is maximal. Moreover, since $J$ contains one fact for each $c_j$, no two facts violate the FD $AB\rightarrow C$. Thus, $J$ contains two facts $\rtk(c_{j_1},\val{1},x_i)$ and $\rtk(c_{j_2},\val{0},x_i)$, but this is a contradiction to the fact that $\tau$ is an assignment (that is, it cannot be the case that $\tau(x_i)=1$ and $\tau(x_i)=0$ as well). To conclude, $J$ is a subset repair that contains exactly $m$ facts, and since no subset repair can contain more than $m$ facts, $J$ is a C-repair.
\end{proof}

\begin{lemma}\label{lemma:ab-bc-hard}
  The problem $\maxsrepfd{\sabc}{\dabc}$ is NP-hard.
\end{lemma}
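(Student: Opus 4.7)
The plan is to reduce from non-mixed CNF satisfiability, following the same template as in the proof of Lemma~\ref{lemma:abc-cb-hard} but with the three attributes re-assigned so that the chain $A \to B \to C$ plays the roles that $AB \to C$ and $C \to B$ played before. Concretely, given a non-mixed CNF $\psi = c_1 \wedge \cdots \wedge c_m$ over variables $x_1,\ldots,x_n$, I construct an instance $I$ as follows: for every clause $c_j$ containing only positive literals and every $x_i$ appearing in $c_j$, include the fact $\rabc(c_j, x_i, \val{1})$; for every clause $c_j$ containing only negative literals and every $\neg x_i$ appearing in $c_j$, include the fact $\rabc(c_j, x_i, \val{0})$. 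The goal is then to prove that $\psi$ is satisfiable if and only if the C-repair of $I$ has exactly $m$ facts.

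The two FDs play clean, orthogonal roles in this encoding. The FD $A \to B$ forces every clause $c_j$ to contribute at most one surviving fact to any consistent subinstance, so every repair has size at most $m$ and a size-$m$ repair selects exactly one ``witness variable'' per clause. The FD $B \to C$ then forces every variable to receive a single truth value across all clauses that select it, so a size-$m$ repair induces a well-defined assignment $\tau$ via $\tau(x_i)\eqdef b$ whenever $\rabc(c_j, x_i, b)$ lies in the repair for some $c_j$. The sign/value matching built into the construction then ensures that each clause $c_j$ is satisfied by the witness chosen for it. Conversely, given a satisfying $\tau$, I pick for each clause one literal it satisfies and keep the single corresponding fact, obtaining a consistent subinstance of size $m$, with consistency under $B \to C$ inherited from $\tau$ being a function.

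I do not expect a real obstacle in this proof: once the attributes are relabelled correctly, the argument is essentially identical to that of Lemma~\ref{lemma:abc-cb-hard}, with $A \to B$ taking the place of $AB \to C$ (both force ``one fact per clause'') and $B \to C$ taking the place of $C \to B$ (both force a single value per variable). The one spot that warrants attention is confirming that the ``if'' direction actually yields a C-repair rather than merely a consistent subinstance, which follows immediately from the universal upper bound of $m$ on repair size provided by $A \to B$.
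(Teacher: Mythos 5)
Your proof is correct and follows essentially the same route as the paper's: the same encoding of a clause--literal pair as a fact $\rabc(c_j,x_i,b)$, with $A\rightarrow B$ enforcing at most one witness variable per clause (hence the $m$ upper bound) and $B\rightarrow C$ enforcing a single truth value per variable. The only, inessential, difference is that you reduce from non-mixed CNF satisfiability while the paper reduces from general CNF satisfiability; both source problems are NP-hard and the correctness argument is otherwise identical.
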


\begin{proof}
We construct a reduction from CNF satisfiability to $\maxsrepfd{\sabc}{\dabc}$.  The input to the first problem is a formula $\psi$ with the free variables 
 $x_1,\ldots,x_n$, such that  $\psi$
 has the form $c_1 \wedge  \cdots \wedge c_m$ where each $c_j$ is a clause. Each clause is a conjunction of variables from the set 
 $\{x_i, \neg x_i : i=1,\ldots,n\}$.
 The goal is to determine 
 if there exists an assignment
 $\tau: \{x_1,\ldots,x_n\} \rightarrow \{0,1\}$ that satisfies $\psi$. 
 Given such an input, 
 we will construct the input 
 $I$ for our problem as follows.
 For each $i=1,\ldots, n$ and $j=1,\ldots, m$, $I$ will contain the following facts:
 \begin{itemize}
 \item
 $\rabc(c_j,x_i,\val{1})$, if $x_i$ appears in $c_j$ without negation.
 \item
$\rabc(c_j,x_i,\val{0})$, if $x_i$ appears in $c_j$ with negation.
 \end{itemize}
 We will now prove that there exists a satisfying assignment to $\psi$ if and only if the C-repair of $I$ contains exactly $m$ facts. 
 
\paragraph*{The ``if'' direction}
Assume that a C-repair $J$ of $I$ contains exactly $m$ facts. The FD $A\rightarrow B$ implies that a subset repair cannot contain two facts $\rabc(c_j,x_{i_1},b_1)$ and $\rabc(c_j,x_{i_2},b_2)$ such that $x_{i_1}\neq x_{i_2}$. Moreover, the FD $B\rightarrow C$ implies that it cannot contain two facts $\rabc(c_j,x_i,\val{1})$ and $\rabc(c_j,x_i,\val{0})$. Thus, each subset repair contains at most one fact $\rabc(c_j,x_i,b_i)$ for each $c_j$. Since $J$ contains exactly $m$ facts, it contains precisely one fact $\rabc(c_j,x_i,b_i)$ for each $c_j$. We will now define an assignment $\tau$ as follows: $\tau(x_i)\eqdef b_i$ if there exists a fact $\rabc(c_j,x_i,b_i)$ in $J$ for some $c_j$. Note that no subset repair contains two facts $\rabc(c_{j_1},x_i,\val{1})$ and $\rabc(c_{j_2},x_i,\val{0})$, thus the assignment is well defined. Finally, sa mentioned above, $J$ contains a fact $\rabc(c_j,x_i,b_i)$ for each $c_j$. If $x_i$ appears in $c_j$ without negation, it holds that $b_i=1$, thus $\tau(x_i)\eqdef 1$ and $c_j$ is satisfied. Similarly, if $x_i$ appears in $c_j$ with negation, it holds that $b_i=0$, thus $\tau(x_i)\eqdef 0$ and $c_j$ is satisfied. Thus, each clause $c_j$ is satisfied by $\tau$ and we conclude that $\tau$ is a satisfying assingment of $\psi$.

\paragraph*{The ``only if'' direction}
Assume that $\tau: \{x_1,\ldots,x_n\} \rightarrow \{0,1\}$ is an assignment that satisfies $\psi$.
 We claim that the C-repair of $I$ containts exactly $m$ facts. Since $\tau$ is a satisfying assignment, for each clause $c_j$ there exists a variable $x_i\in c_j$, such that $\tau(x_i)=1$ if $x_i$ appears in $c_j$ without negation or $\tau(x_i)=0$ if it appears in $c_j$ with negation. Let us build an instance $J$ as follows. For each $c_j$ we will choose exactly one variable $x_i$ that satisfies the above and add the fact $\rabc(c_j,x_i,b_i)$ (where $\tau(x_i)=b_i$) to $J$. Since there are $m$ clauses, $J$ will contain exactly $m$ facts, thus it is only left to prove that $J$ is a subset repair. Let us assume, by way of contradiction, that $J$ is not a subset repair. As mentioned above, each subset repair can contain at most one fact $\rabc(c_j,x_i,b_i)$ for each $c_j$, thus $J$ is maximal. Moreover, since $J$ contains one fact for each $c_j$, no two facts violate the FD $A\rightarrow B$. Thus, $J$ contains two facts $\rabc(c_{j_1},x_i,\val{1})$ and $\rabc(c_{j_2},x_i,0)$, but this is a contradiction to the fact that $\tau$ is an assignment (that is, it cannot be the case that $\tau(x_i)=1$ and $\tau(x_i)=0$ as well). To conclude, $J$ is a subset repair that contains exactly $m$ facts, and since no subset repair can contain more than $m$ facts, $J$ is a C-repair.
\end{proof}

\begin{lemma}\label{lemma:ac-bc-hard}
  The problem $\maxsrepfd{\stfd}{\dtfd}$ is NP-hard.
\end{lemma}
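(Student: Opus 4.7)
The plan is to show NP-hardness by a polynomial-time reduction from CNF satisfiability, following the style of Lemmas~\ref{lemma:abc-cb-hard} and~\ref{lemma:ab-bc-hard}. Given a CNF formula $\psi = c_1 \wedge \cdots \wedge c_m$ over variables $x_1, \ldots, x_n$, I would construct the instance $I$ by adding, for each occurrence of variable $x_i$ with sign $b \in \{0,1\}$ in clause $c_j$, the fact $\rtfd(c_j, x_i, v_{i,b})$, where $v_{i,b}$ is a distinct constant encoding the literal $(x_i, b)$. The crucial design choice is the $C$-coordinate: placing the full literal identity there (as opposed to just the sign $b$) ensures that any two distinct literals, even within the same clause, yield distinct values in $C$.

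Under this encoding, the FD $A \to C$ forces any consistent subinstance to keep at most one literal per clause (two different literals of a clause involve different variables and hence different $C$-values), and the FD $B \to C$ forces all kept facts mentioning a variable $x_i$ to agree on $v_{i,b}$ and therefore on $b$, giving a well-defined assignment $\tau$. Thus the C-repair has size at most $m$, and equals $m$ exactly when one can pick, for every clause, a literal whose sign matches $\tau$, i.e., exactly when $\psi$ is satisfiable. I would complete the argument in the two standard directions: given a satisfying $\tau$, selecting one $\tau$-satisfied literal per clause yields $m$ pairwise-consistent facts; conversely, given a C-repair of size $m$, the sign attached to each variable in the kept facts (well-defined by $B \to C$) defines a $\tau$ that satisfies every clause, since $A \to C$ ensures exactly one kept fact per clause and by construction that fact corresponds to a $\tau$-satisfied literal.

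The main subtlety is precisely the encoding of the $C$-coordinate. If one naively placed only the sign $b$ in $C$ (as in Lemma~\ref{lemma:ab-bc-hard}), then $A \to C$ would only enforce that the literals kept from a clause agree in sign -- allowing a single clause to contribute multiple facts to the repair and breaking the bijection with satisfying assignments. Encoding the entire literal $(x_i, b)$ as $v_{i,b}$ restores the ``at most one fact per clause'' behavior, making $A \to C$ and $B \to C$ simultaneously act as tight per-clause and per-variable filters that align the C-repair size precisely with the count of clauses simultaneously satisfiable by a common $\tau$, whence the threshold ``is the C-repair of size exactly $m$?'' coincides with ``is $\psi$ satisfiable?''.
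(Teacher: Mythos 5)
Your proposal is correct and matches the paper's own proof essentially verbatim: the paper also reduces from CNF satisfiability using facts $\rtfd(c_j,x_i,\langle x_i,b\rangle)$, i.e., your $v_{i,b}$ is exactly the paper's pair $\langle x_i, b\rangle$ in the $C$-coordinate, and both arguments use $A\rightarrow C$ to cap each clause at one fact and $B\rightarrow C$ to make the induced assignment well defined. Your remark explaining why the full literal (rather than just the sign) must be placed in $C$ is a correct and useful articulation of the key design point.
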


\begin{proof}
We construct a reduction from CNF satisfiability to $\maxsrepfd{\stk}{\dtk}$. Given an input $\psi$ to the satisfiability problem, 
 we will construct the input 
 $I$ for our problem as follows.
 For each $i=1,\ldots, n$ and $j=1,\ldots, m$, $I$ will contain the following facts:
 \begin{itemize}
 \item
 $\rtfd(c_j,x_i,\langle x_i,\val{1} \rangle)$, if $x_i$ appears in $c_j$ without negation.
 \item
$\rtfd(c_j,x_i,\langle x_i,\val{0} \rangle)$, if $x_i$ appears in $c_j$ with negation.
 \end{itemize}
 We will now prove that there exists a satisfying assignment to $\psi$ if and only if the C-repair of $I$ contains exactly $m$ facts. 
 
\paragraph*{The ``if'' direction}
Assume that a C-repair $J$ of $I$ contains exactly $m$ facts. The FD $A\rightarrow C$ implies that no subset repair contains two facts $\rtfd(c_j,x_{i_1},\langle x_{i_1},b_{i_1} \rangle)$ and $\rtfd(c_j,x_{i_2},\langle x_{i_2},b_{i_2} \rangle)$ such that $x_{i_1}\neq x_{i_2}$. Moreover, the FD $B\rightarrow C$ implies that no subset repair contains two facts $\rtfd(c_j,x_i,\langle x_i,1 \rangle)$ and $\rtfd(c_j,x_i,\langle x_i,0 \rangle)$. Thus, each subset repair contains at most one fact $\rtfd(c_j,x_i,\langle x_i,b_i \rangle)$ for each $c_j$. Since $J$ contains exactly $m$ facts, it contains precisely one fact $\rtfd(c_j,x_i,\langle x_i,b_i \rangle)$ for each $c_j$. We will now define an assignment $\tau$ as follows: $\tau(x_i)\eqdef b_i$ if there exists a fact $\rtfd(c_j,x_i,\langle x_i,b_i \rangle)$ in $J$ for some $c_j$. Note that no subset repair contains two facts $\rtfd(c_{j_1},x_i,\langle x_i,\val{1} \rangle)$ and $\rtfd(c_{j_2},x_i,\langle x_i,\val{0} \rangle)$, thus the assignment is well defined. Finally, sa mentioned above, $J$ contains a fact $\rtfd(c_j,x_i,\langle x_i,b_i \rangle)$ for each $c_j$. If $x_i$ appears in $c_j$ without negation, it holds that $b_i=1$, thus $\tau(x_i)\eqdef 1$ and $c_j$ is satisfied. Similarly, if $x_i$ appears in $c_j$ with negation, it holds that $b_i=0$, thus $\tau(x_i)\eqdef 0$ and $c_j$ is satisfied. Thus, each clause $c_j$ is satisfied by $\tau$ and we conclude that $\tau$ is a satisfying assingment of $\psi$.

\paragraph*{The ``only if'' direction}
Assume that $\tau: \{x_1,\ldots,x_n\} \rightarrow \{0,1\}$ is an assignment that satisfies $\psi$.
 We claim that the C-repair of $I$ containts exactly $m$ facts. Since $\tau$ is a satisfying assignment, for each clause $c_j$ there exists a variable $x_i\in c_j$, such that $\tau(x_i)=1$ if $x_i$ appears in $c_j$ without negation or $\tau(x_i)=0$ if it appears in $c_j$ with negation. Let us build an instance $J$ as follows. For each $c_j$ we will choose exactly one variable $x_i$ that satisfies the above and add the fact $\rtfd(c_j,x_i,\langle x_i,b_i \rangle)$ (where $\tau(x_i)=b_i$) to $J$. Since there are $m$ clauses, $J$ will contain exactly $m$ facts, thus it is only left to prove that $J$ is a subset repair. Let us assume, by way of contradiction, that $J$ is not a subset repair. As mentioned above, each subset repair can contain at most one fact $\rtfd(c_j,x_i,\langle x_i,b_i \rangle)$ for each $c_j$, thus $J$ is maximal. Moreover, since $J$ contains one fact for each $c_j$, no two facts violate the FD $A\rightarrow X$. Thus, $J$ contains two facts $\rtfd(c_{j_1},x_i,\langle x_i,\val{1} \rangle)$ and $\rtfd(c_{j_2},x_i,\langle x_i,\val{0} \rangle)$, but this is a contradiction to the fact that $\tau$ is an assignment (that is, it cannot be the case that $\tau(x_i)=1$ and $\tau(x_i)=0$ as well). To conclude, $J$ is a subset repair that contains exactly $m$ facts, and since no subset repair contains more than $m$ facts, $J$ is a C-repair.
\end{proof}

\begin{lemma}\label{lemma:abc-acb-bca-hard}
  The problem $\maxsrepfd{\str}{\dtr}$ is NP-hard.
\end{lemma}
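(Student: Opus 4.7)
My plan is to reduce from CNF satisfiability to $\maxsrepfd{\str}{\dtr}$ in the same spirit as the preceding three lemmas. Given $\psi = c_1\wedge\cdots\wedge c_m$ over $x_1,\ldots,x_n$, I will build an instance $I$ of $\rtr(A,B,C)$ so that the C-repair of $I$ has a prescribed size iff $\psi$ is satisfiable. The structural novelty of $\str$ is that every pair of attributes is a key: two distinct consistent facts must differ on at least two of the three coordinates, so consistent subinstances are exactly partial Latin squares on the chosen triples. This is what makes $\str$ qualitatively harder than the previous three hard schemas, in which only one FD was ``responsible'' for each type of conflict.

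For the encoding, I plan to introduce, for each literal $\ell = x_i^s$ in clause $c_j$, a fact whose $A$ and $C$ coordinates both depend on $c_j$ (so that any two distinct facts of the same clause agree on two coordinates and are eliminated by $AC\rightarrow B$), while the $B$ coordinate carries $\langle x_i,s\rangle$. A naive version like $\rtr(c_j,x_i,s)$ fails because $BC\rightarrow A$ then disallows reusing $(x_i,s)$ across clauses; making $C$ depend on both the clause and the sign (e.g.\ $\langle c_j,s\rangle$) fixes this. To re-impose the variable-consistency constraint that $BC\rightarrow A$ is no longer enforcing, I plan to add auxiliary ``linker'' facts for each variable $x_i$ that share two coordinates with the literal-facts of $x_i^0$ and $x_i^1$ across different clauses, designed so that any repair containing two opposite-sign literal-facts of the same variable must drop at least one linker --- pushing the repair size below the intended target. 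A satisfying assignment $\tau$ then yields a consistent subinstance of the target size (one witness literal per clause, all linkers) and, conversely, a C-repair of the target size selects exactly one literal per clause (by $AC\rightarrow B$) in a variable-consistent way (by the linkers), whose variable assignment satisfies $\psi$. The equivalence proof then follows the usual ``if'' / ``only if'' pattern used in Lemmas~\ref{lemma:abc-cb-hard}--\ref{lemma:ac-bc-hard}.

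The main obstacle is the gadget design itself: the three symmetric FDs of $\dtr$ all act on different pairs of attributes and interact, so any encoding crafted to trigger a conflict of one type can easily introduce spurious conflicts of another type. Calibrating the encoding --- choosing which information goes into which coordinate, and getting the number and structure of linkers right --- so that a maximum C-repair of $I$ corresponds precisely to a satisfying assignment of $\psi$, and not merely to some unrelated large partial Latin square, is the technically delicate step; once this is in place, the equivalence argument itself is a routine analysis of which pairs of facts agree on two coordinates under each of the three FDs $AB\rightarrow C$, $AC\rightarrow B$, and $BC\rightarrow A$.
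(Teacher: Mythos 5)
Your proposal is a plan, not a proof: the entire argument hinges on the ``linker'' gadget, which you never construct, and it is precisely the part that is likely to fail. Under $\dtr$ any two distinct facts that agree on two of the three coordinates are in conflict. So if a linker fact shares two coordinates with a literal-fact of $x_i^1$, it conflicts with that literal-fact \emph{unconditionally} --- including it forces dropping the literal-fact whether or not the opposite literal is also selected. That is not the semantics you need (``penalize only when both signs of $x_i$ appear''). To get a conditional penalty out of a pure cardinality objective you would need a counting gadget (e.g., a chain of linkers where exactly one can be saved by committing to one sign), and with only three coordinates, all three of which are constrained symmetrically by $AB\rightarrow C$, $AC\rightarrow B$, $BC\rightarrow A$, it is far from clear such a gadget exists without introducing spurious conflicts among linkers of different variables or between linkers and literal-facts of other clauses. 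You acknowledge this is ``the technically delicate step,'' but that step is the lemma; without it nothing is proved. The target repair size is also left unspecified, since it depends on how many linkers a variable-consistent selection retains.

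Your opening structural observation --- that a consistent subinstance of $\rtr$ is exactly a set of triples pairwise agreeing on at most one coordinate --- is correct, and it points directly at a much cleaner reduction than the SAT-gadget route: that condition says the triples, viewed as triangles in a tripartite graph on the $A$-, $B$-, and $C$-values, are pairwise edge-disjoint. The paper's proof simply reduces from the known NP-hard problem of finding a maximum set of edge-disjoint triangles in a tripartite graph: each triangle $(a_i,b_j,c_k)$ of the input graph becomes the fact $\rtr(a_i,b_j,c_k)$, and C-repairs correspond exactly to maximum edge-disjoint triangle packings, with no gadgets and no calibration. I would recommend abandoning the CNF route for this schema and instead either adopting that reduction or, if you insist on starting from SAT, first establishing hardness of edge-disjoint triangle packing and then applying the trivial correspondence.
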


\begin{proof}
We construct a reduction from the problem of finding the maximum number of edge-disjoint triangles in a tripartite graph, which is known to be an NP-hard problem~\cite{DBLP:journals/dam/Colbourn84, DBLP:conf/stacs/HajirasoulihaJKS07}. The input to this problem is a tripartite graph $g$.
 The goal is to determine what is the maximum number of edge-disjoint triangles in $g$ (that is, no two triangles share an edge).
 We assume that $g$ contains three sets of nodes: $\set{a_1,\dots,a_n}$, $\set{b_1,\dots,b_l}$ and $\set{c_1,\dots,c_r}$. Given such an input, 
 we will construct the input 
 $I$ for our problem as follows.
 For each traingle in $g$ that consists of the nodes $a_i$, $b_j$, and $c_k$, $I$ will contain a fact $\rtr(a_i,b_j,c_k)$.
 We will now prove that the maximum number of edge-disjoint triangles in $g$ is $m$ if and only if the number of facts in a C-repair of $I$ is $m$. Thus, if we could solve the problem $\maxsrepfd{\str}{\dtr}$ in polynomial time, we could also solve the first problem in polynomial time. 
 
\paragraph*{The ``if'' direction}
Assume that a C-repair $J$ of $I$ contains exactly $m$ facts. The FD $AB\rightarrow C$ implies that a subset repair cannot contain two facts $\rtr(a_i,b_j,c_{k_1})$ and $\rtr(a_i,b_j,c_{k_2})$ such that $c_{k_1}\neq c_{k_2}$. Moreover, the FD $AC\rightarrow B$ implies that it cannot contain two facts $\rtr(a_i,b_{j_1},c_k)$ and $\rtr(a_i,b_{j_2},c_k)$ such that $b_{j_1}\neq b_{j_2}$, and the FD $BC\rightarrow A$ implies that it cannot contain two facts $\rtr(a_{i_1},b_j,c_k)$ and $\rtr(a_{i_2},b_j,c_k)$ such that $a_{i_1}\neq a_{i_2}$. Thus, the two triangles $(a_{i_1},b_{j_1},c_{k_1})$ and $(a_{i_2},b_{j_2},c_{k_2})$ in $g$ that correspond to two facts $\rtr(a_{i_1},b_{j_1},c_{k_1})$ and $\rtr(a_{i_1},b_{j_1},c_{k_1})$ in $J$, will not share an edge (they can only share a single node). Hence, there are at least $m$ edge-disjoint triangles in $g$. Let us assume, by way of contradiction, that $p$, the maximum number of edge-disjoint triangles in $g$, is greater than $m$. Let $\set{t_1,\dots,t_p}$ be a set of $p$ edge-disjoint triangles in $g$. In this case, we can build a subinstance $J'$ of $I$ as follows: for each triangle $(a_i,b_j,c_k)$  in $\set{t_1,\dots,t_p}$ we will add the fact $\rtr(a_i,b_j,c_k)$ to $J'$. Note that since the triangles are edge-disjoint, no two triangles share more than one node, thus no two facts in $J'$ agree on the value of more than one attribute. Therefore, $J'$ is consistent w.r.t. $\depset$, and we found a consistent subinstance of $I$ that contains more facts than $J$, which is a contradiction to the fact that $J$ is a C-repair of $I$. We can conclude the maximum number of edge-disjoint triangles in $g$ is $m$.

\paragraph*{The ``only if'' direction}
Assume that the maximum number of edge-disjoint triangles in $g$ is $m$. We can again build a consistent subinstance $J$ of $I$ as follows: for each triangle $(a_i,b_j,c_k)$  in $\set{t_1,\dots,t_p}$ we will add the fact $\rtr(a_i,b_j,c_k)$ to $J$. Thus, there is a consistent subinstance of $I$ that contains $m$ facts. Let us assume, by way of contradiction, that there is another consistent subinstance $J'$ of $I$ that contains more than $m$ facts. In this case, we can build a set of edge-disjoint traingles in $g$ as follows: for each fact $\rtr(a_i,b_j,c_k)$ in $J'$, we add the triangle $(a_i,b_j,c_k)$ to the set. Since no two facts in $J'$ agree on the value in more than one attribute, clearly, no two triangles in the set share an edge. Thus, we found a set of edge-disjoint triangles in $g$ that contains more than $m$ triangles, which is a contradiction to the fact that the maximum number of edge-disjoint triangles in $g$ is $m$. We can conclude that no consistent subinstance of $I$ contains more than $m$ facts, thus $J$ is a C-repair of $I$.
\end{proof}

\subsection{Fact-Wise Reductions}

Let $(\signature,\depset)$ be an FD schema. Note that as long as $\depset$ is a chain, we can always apply either simplification $1$ or simplification $2$ to the schema. Thus, if we reach a point where we cannot apply any simplifications to the schema, the set of FDs is not a chain. In this case, $\depset$ contains at least two local minima $X_1\rightarrow Y_1$ and $X_2\rightarrow Y_2$, and one of the following holds:
\begin{itemize}
\item $(X_1^+\setminus X_1)\cap X_2^+= \emptyset$ and $(X_2^+\setminus X_2)\cap X_1^+ = \emptyset$.
\item $(X_1^+\setminus X_1) \cap (X_2^+ \setminus X_2)\neq \emptyset$, $(X_1^+\setminus X_1)\cap X_2 = \emptyset$ and $(X_2^+ \setminus X_2)\cap X_1 = \emptyset$. 
\item $(X_1^+\setminus X_1)\cap X_2 \neq \emptyset$ and $(X_2^+ \setminus X_2)\cap X_1 = \emptyset$. 
\item $(X_1^+\setminus X_1)\cap X_2 \neq \emptyset$ and $(X_2^+ \setminus X_2)\cap X_1 \neq \emptyset$ and also $(X_1\setminus X_2)\subseteq (X_2^+\setminus X_2)$ and $(X_2\setminus X_1)\subseteq (X_1^+\setminus X_1)$. In this case, $\depset$ contains at least one more local minimum. Otherwise, for every FD $Z\rightarrow W$ in $\depset$ it holds that either $X_1\subseteq Z$ or $X_2\subseteq Z$. If $X_1\cap X_2\neq\emptyset$, then we can apply simplification $1$ to the schema, using an attribute from $X_1\cap X_2$. If $X_1\cap X_2=\emptyset$, then we can apply simplification $3$ to the schema. 
\item $(X_1^+\setminus X_1)\cap X_2 \neq \emptyset$ and $(X_2^+ \setminus X_2)\cap X_1 \neq \emptyset$ and also $(X_2\setminus X_1)\not\subseteq (X_1^+\setminus X_1)$. 
\end{itemize}
We will now prove that for each one of these cases there is a fact-wise reduction from one of the hard schemas.

\begin{lemma}\label{lemma:disjoint-reduction}
Let $(\signature,\depset)$ be an FD schema, such that $\depset$ contains two local minima $X_1\rightarrow Y_1$ and $X_2\rightarrow Y_2$, and it holds that $(X_1^+\setminus X_1)\cap X_2^+= \emptyset$ and $(X_2^+\setminus X_2)\cap X_1^+ = \emptyset$. Then, there is a fact-wise reduction from $\stfd$ to $\signature$.
\end{lemma}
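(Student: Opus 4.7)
The plan is to construct an explicit fact-wise reduction $\Pi$ from $(\stfd,\dtfd)$---the schema with relation $\rtfd(A,B,C)$ and FDs $A\to C$ and $B\to C$---to $(\signature,\depset)$. Since trivial attributes are stripped from every FD before the simplifications are tested, both $Y_1\setminus X_1$ and $Y_2\setminus X_2$ are nonempty, and I fix witnesses $A^*\in Y_1\setminus X_1$ and $B^*\in Y_2\setminus X_2$. The hypothesis forces $A^*\in X_1^+\setminus X_1$ to be disjoint from $X_2^+$, hence from $X_2$, and symmetrically $B^*\notin X_1$; a short computation also yields $X_1^+\cap X_2^+=X_1\cap X_2$, so the five regions $X_1\setminus X_2$, $X_2\setminus X_1$, $X_1\cap X_2$, $X_1^+\setminus X_1$, and $X_2^+\setminus X_2$ are pairwise disjoint.

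The mapping $\Pi(\rtfd(a,b,c))$ is then defined attribute by attribute, each attribute $A_i$ of $R$ receiving a value that encodes a chosen subset of $\{a,b,c\}$: value $a$ if $A_i\in X_1\setminus X_2$; value $b$ if $A_i\in X_2\setminus X_1$; a fixed constant if $A_i\in X_1\cap X_2$; the pair $\langle a,c\rangle$ if $A_i\in X_1^+\setminus X_1$; and $\langle b,c\rangle$ if $A_i\in X_2^+\setminus X_2$. Attributes outside $X_1^+\cup X_2^+$ are handled by a short fixed-point propagation along the FDs of $\depset$: each such $A_i$ is assigned the minimal encoding subset of $\{a,b,c\}$ such that every FD $Z\to W\in\depset$ with $A_i\in Z$ has image agreement on $Z$ implying image agreement on $W$. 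Injectivity follows immediately by recovering $a$, $b$, $c$ from the values at (any attribute of) $X_1\setminus X_2$, $X_2\setminus X_1$, and from $A^*$, respectively, and polynomial-time computability is clear.

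It remains to show that $\Pi$ preserves (in)consistency. The forward direction is straightforward: if the sources violate $A\to C$, i.e.\ $a_1=a_2$ and $c_1\neq c_2$, then the images agree on all of $X_1$ but disagree on $A^*\in Y_1$, violating $X_1\to Y_1$; the case $B\to C$ is dual via $X_2\to Y_2$. The main obstacle is the converse---showing that when the sources are $\dtfd$-consistent, no FD $Z\to W\in\depset$ is violated by the image pair. For $X_1\to Y_1$ and $X_2\to Y_2$ this is a direct check against the encoding. For every other $Z\to W$, the essential use of the hypothesis is that $Z\subseteq X_1^+$ forces $W\subseteq Z^+\subseteq X_1^+$ (and symmetrically for $X_2^+$), so both $Z$ and $W$ are encoded using only the $\{a,c\}$- (resp.\ $\{b,c\}$-) components, from which agreement on $Z$ propagates to $W$. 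FDs whose $Z$ crosses between the two sides or involves attributes outside $X_1^+\cup X_2^+$ are absorbed by the fixed-point encoding, whose well-definedness---that the lower bounds from LHS occurrences never exceed the upper bounds from RHS occurrences---is the most delicate step of the proof and again rests on the disjointness furnished by the hypothesis.
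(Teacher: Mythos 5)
Your construction agrees with the paper's on the five regions $X_1\cap X_2$, $X_1\setminus X_2$, $X_2\setminus X_1$, $X_1^+\setminus X_1$, $X_2^+\setminus X_2$, and your arguments for injectivity, for inconsistency preservation, and for FDs whose left-hand side sits inside $X_1^+$ or inside $X_2^+$ match the paper's. The genuine gap is in the remaining region, the attributes outside $X_1^+\cup X_2^+$. You replace a concrete assignment by a ``fixed-point propagation'' that is never actually defined: ``the minimal encoding subset of $\{a,b,c\}$ such that every FD $Z\rightarrow W$ with $A_i\in Z$ has image agreement on $Z$ implying image agreement on $W$'' is circular (the constraint on $A_i$ refers to the encodings of the attributes of $W$, which may themselves be outside $X_1^+\cup X_2^+$ and hence also being determined), and minimality pulls in the wrong direction (smaller encodings make agreement on $Z$ \emph{easier} and the implication \emph{harder} to satisfy). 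You then declare the well-definedness of this construction ``the most delicate step of the proof'' and do not carry it out; but that step is exactly the content of consistency preservation for every FD whose left-hand side leaves $X_1^+\cup X_2^+$, so the proof of the hardest direction is missing precisely where it matters.

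The gap is avoidable, and the paper's proof shows how: assign the fixed value $\langle a,b\rangle$ to every attribute outside $X_1^+\cup X_2^+$. Two \emph{distinct} facts $\rtfd(a,b,c)$ and $\rtfd(a',b',c')$ that are consistent w.r.t.\ $\dtfd$ can never satisfy both $a=a'$ and $b=b'$ (either equality already forces $c=c'$, so both together force equality of the facts). Hence their images automatically disagree on every such attribute, and any FD of $\depset$ containing one of them on its left-hand side is vacuously satisfied; the remaining FDs have left-hand side inside $X_1^+\cup X_2^+$ and are handled by the case analysis you already give (agreement exactly on $X_2^+$, exactly on $X_1^+$, or only on $X_1\cap X_2$, the last being impossible as a left-hand side by local minimality of $X_1$ and $X_2$). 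In fact the same observation shows that your ``crossing'' FDs need no special treatment either: agreement on an attribute valued $\langle a,c\rangle$ together with one valued $b$ or $\langle b,c\rangle$ again forces the two source facts to coincide. If you replace the fixed-point device by this one-line argument, your proof closes and coincides with the paper's.
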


\begin{proof}
First note that since $\depset$ is not a chain, there are indeed at least two local minima, $X_1\rightarrow Y_1$ and $X_2\rightarrow Y_2$, in $\depset$, such that $X_1\neq X_2$. We define a fact-wise reduction $\Pi:\stfd \rightarrow \signature$, using $X_1\rightarrow Y_1$ and $X_2\rightarrow Y_2$ and the constant $\odot \in \consts$.
Let $f = \rtfd(a,b,c)$ be a fact over $\stfd$ and let $\set{A_1,\dots,A_n}$ be the set of attributes in the single relation of $\signature$.
We define $\Pi $ as follows:
\[
\Pi (f)[A_k] \eqdef
\begin{cases}
\odot & \mbox{$A_k\in X_1\cap X_2$} \\ 
 a& \mbox{$A_k\in X_1 \setminus X_2$}\\ 
 b& \mbox{$A_k\in X_2 \setminus X_1$} \\
 \langle a,c\rangle& \mbox{$A_k\in X_1^+\setminus X_1$} \\
 \langle b,c\rangle& \mbox{$A_k\in X_2^+\setminus X_2$} \\
 \langle a,b\rangle& \mbox{otherwise}
\end{cases}
\]
It is left to show that $\Pi$ is a fact-wise reduction.
To do so, we prove that $\Pi$ is well defined, injective and preserves consistency and inconsistency.

\partitle{$\mathbf{\Pi}$ is well defined}
This is straightforward from the definition and the fact that $(X_1^+\setminus X_1)\cap X_2^+= \emptyset$ and $(X_2^+\setminus X_2)\cap X_1^+ = \emptyset$.

\partitle{$\mathbf{\Pi}$ is injective}
Let $f,f'$ be two facts, such that $f=\rtfd(a,b,c)$ and $f' = \rtfd(a',b',c')$.
Assume that $\Pi (f) = \Pi (f')$.  Let us denote
$\Pi (f)=R(x_1,\dots, x_n)$ and $\Pi (f')=R(x'_1,\dots, x'_n)$.
Note that $X_1 \setminus X_2$ and $X_2 \setminus X_1$ are not empty since $X_1\neq X_2$. Moreover, since both FDs are minimal, $X_1\not\subset X_2$ and $X_2\not\subset X_1$.
Therefore, there are $l$ and $p$ such that $\Pi (f)[A_l] = a$, $\Pi (f)[A_p]= b$. Furthermore, since $X_1\rightarrow Y_1$ and $X_2\rightarrow Y_2$ are not trivial, there are $m$ and $n$ such that $\Pi (f)[A_m]=\langle a,c\rangle$ and $\Pi (f)[A_n]=\langle b,c\rangle$.
Hence, $\Pi (f) = \Pi (f')$ implies that $\Pi (f)[A_l] = \Pi (f')[A_l]$, $\Pi (f)[A_p] = \Pi (f')[A_p]$, $\Pi (f)[A_m] = \Pi (f')[A_m]$ and also $\Pi (f)[A_n]= \Pi (f')[A_n]$. We obtain that
$a=a'$, $b=b'$ and $c=c'$, which implies $f=f'$.

\partitle{$\mathbf{\Pi}$ preserves consistency}
Let $f=\rtfd(a,b,c)$ and $f' = \rtfd(a',b',c')$ be two distinct facts.
We contend that  the set $\{f,f'\}$ is consistent w.r.t. $\dtfd$ if and only if the set $\{\Pi(f),\Pi(f')\}$ is consistent w.r.t. $\depset$.
\paragraph*{The ``if'' direction}
Assume that $\{f,f'\}$ is consistent w.r.t $\dtfd$. We prove that  $\{\Pi(f),\Pi(f')\}$ is consistent w.r.t $\depset$. First, note that each FD that contains an attribute $A_k\not\in (X_1^+\cup X_2^+)$ on its left-hand side is satisfied by $\{\Pi(f),\Pi(f')\}$, since $f$ and $f'$ cannot agree on both $A$ and $B$ (otherwise, the FD $A\rightarrow C$ implies that $f=f'$). Thus, from now on we will only consider FDs that do not contain an attribute $A_k\not\in (X_1^+\cup X_2^+)$ on their left-hand side. The FDs in $\dtfd$ imply that if $f$ and $f'$ agree on one of $\set{A,B}$ then they also agree on $C$, thus one of the following holds:
\begin{itemize}
\item $a\neq a'$, $b= b'$ and $c= c'$. In this case, $\Pi(f)$ and $\Pi(f')$ only agree on the attributes $A_k$ such that $A_k\in X_1\cap X_2$ or $A_k\in X_2\setminus X_1$ or $A_k\in X_2^+\setminus X_2$. That is, they only agree on the attributes $A_k$ such that $A_k\in X_2^+$. Thus, each FD that contains an attribute $A_k\not\in X_2^+$ on its left-hand side is satisfied. Moreover, any FD that contains only attributes $A_k\in X_2^+$ on its left-hand side, also contains only attributes $A_k\in X_2^+$ on its right-hand side (by definition of a closure), thus $\Pi(f)$ and $\Pi(f')$ agree on both the left-hand side and the right-hand side of such FDs and $\{\Pi(f),\Pi(f')\}$ satisfies all the FDs in $\depset$.
\item $a= a'$, $b\neq b'$ and $c= c'$. This case is symmetric to the previous one, thus a similar proof applies for this case as well.
\item $a\neq a'$, $b\neq b'$. In this case, $\Pi(f)$ and $\Pi(f')$ only agree on the attributes $A_k$ such that $A_k\in X_1\cap X_2$. Since $X_1\rightarrow Y_1$ and $X_2\rightarrow Y_2$ are minimal, there is no FD in $\depset$ that contains only attributes $A_k$ such that $A_k\in X_1\cap X_2$ on its left-hand side. Thus, $\Pi(f)$ and $\Pi(f')$ do not agree on the left-hand side of any FD in $\depset$ and $\{\Pi(f),\Pi(f')\}$ is consistent w.r.t. $\depset$.
\end{itemize}
This concludes our proof of the ``if'' direction.

\paragraph*{The ``only if'' direction}
Assume $\set{f,f'}$ is inconsistent w.r.t. $\dtfd$. We prove that $\{\Pi(f),\Pi(f')\}$ is inconsistent w.r.t. $\depset$.
Since $\set{f,f'}$ is inconsistent w.r.t. $\dtfd$ it either holds that $a=a'$ and $c\neq c'$ or $b=b'$ and $c\neq c'$. In the first case, $\Pi(f)$ and $\Pi(f')$ agree on the attributes on the left-hand side of the FD $X_1\rightarrow Y_1$, but do not agree on at least one attribute on its right-hand side (since the FD is not trivial). Similarly, in the second case, $\Pi(f)$ and $\Pi(f')$ agree on the attributes on the left-hand side of the FD $X_2\rightarrow Y_2$, but do not agree on at least one attribute on its right-hand side. Thus, $\{\Pi(f),\Pi(f')\}$ does not satisfy at least one of these FDs and $\{\Pi(f),\Pi(f')\}$ is inconsistent w.r.t. $\depset$.
\end{proof}

\begin{lemma}\label{lemma:disjoint-left-reduction}
Let $(\signature,\depset)$ be an FD schema, such that $\depset$ contains two local minima $X_1\rightarrow Y_1$ and $X_2\rightarrow Y_2$, and one of the following holds: 
\begin{itemize}
\item $(X_1^+\setminus X_1) \cap (X_2^+ \setminus X_2)\neq \emptyset$, $(X_1^+\setminus X_1)\cap X_2 = \emptyset$ and $(X_2^+ \setminus X_2)\cap X_1 = \emptyset$,
\item $(X_1^+\setminus X_1)\cap X_2 \neq \emptyset$ and $(X_2^+ \setminus X_2)\cap X_1 = \emptyset$.
\end{itemize}
Then, there is a fact-wise reduction from $\sabc$ to $\signature$.
\end{lemma}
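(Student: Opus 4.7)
The plan is to construct a fact-wise reduction $\Pi$ from $\sabc$ to $\signature$ that uses the two local minima to simulate the chain $A\to B\to C$: the FD $X_1\to Y_1$ plays the role of $A\to B$ (violated when two source facts agree on $A$ but disagree on $B$), while $X_2\to Y_2$ plays the role of $B\to C$ (violated when they agree on $B$ but disagree on $C$). Fixing a distinguished constant $\odot\in\consts$, I would map a fact $f=\rabc(a,b,c)$ by cases according to which region of the partition of the attributes induced by $X_1,X_2,X_1^+,X_2^+$ contains $A_k$:
\[
\Pi(f)[A_k] \eqdef
\begin{cases}
\odot & \mbox{$A_k\in X_1\cap X_2$} \\
a & \mbox{$A_k\in X_1\setminus X_2$} \\
b & \mbox{$A_k\in X_2\setminus X_1$} \\
\langle b,c\rangle & \mbox{$A_k\in (X_1^+\setminus X_1)\cap(X_2^+\setminus X_2)$} \\
\langle a,b\rangle & \mbox{$A_k\in (X_1^+\setminus X_1)\setminus X_2^+$} \\
\langle b,c\rangle & \mbox{$A_k\in (X_2^+\setminus X_2)\setminus X_1^+$} \\
\langle a,b\rangle & \mbox{otherwise}
\end{cases}
\]
The hypothesis $(X_2^+\setminus X_2)\cap X_1=\emptyset$ ensures these regions partition the attributes, so $\Pi$ is well defined. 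The key trick that makes the second listed case of the lemma work is that any attribute in $(X_1^+\setminus X_1)\cap X_2$, which is automatically a subset of $X_2\setminus X_1$, inherits the value $b$, which serves simultaneously as the $X_2$-value (agreement iff $b=b'$) and as a witness of disagreement inside $Y_1\setminus X_1$ when $a=a'$ and $b\neq b'$.

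For injectivity, I would use that $X_1$ and $X_2$ are distinct incomparable local minima (they cannot be strictly comparable without contradicting the minimality of one of them), so $X_1\setminus X_2$ and $X_2\setminus X_1$ are nonempty, which lets us recover $a$ and $b$ from $\Pi(f)$; then, since $X_2\to Y_2$ is nontrivial, $X_2^+\setminus X_2$ is nonempty and contains at least one attribute with value $\langle b,c\rangle$, recovering $c$. For preservation of inconsistency, I would case split on which FD of $\dabc$ is violated: if $a=a'$ and $b\neq b'$ then $\Pi(f)$ and $\Pi(f')$ agree on all of $X_1$ (values $a$ or $\odot$), while every attribute of $Y_1\setminus X_1\subseteq X_1^+\setminus X_1$ lies in one of the three regions whose values mention $b$ and therefore disagrees, so $X_1\to Y_1$ is violated; the case $b=b'$, $c\neq c'$ is symmetric and violates $X_2\to Y_2$ (the two regions that can contain $Y_2\setminus X_2$ both get value $\langle b,c\rangle$).

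The subtle part is preserving consistency. When $a\neq a'$ and $b\neq b'$, $\Pi(f)$ and $\Pi(f')$ agree only on $X_1\cap X_2$, which is a strict subset of $X_1$ by incomparability of $X_1$ and $X_2$; hence by local minimality no FD in $\depset$ has its left-hand side contained in the agreement set and every FD is vacuously satisfied. The main obstacle is the case $a\neq a'$, $b=b'$, $c=c'$: here I would first verify that the agreement set is exactly $X_2^+$, using $(X_2^+\setminus X_2)\cap X_1=\emptyset$ crucially to ensure that no attribute of $X_2^+\setminus X_2$ is in $X_1$ and thus assigned an $a$-dependent value, and using that both the $(X_1^+\setminus X_1)\cap(X_2^+\setminus X_2)$ region and the $(X_2^+\setminus X_2)\setminus X_1^+$ region receive the $a$-free value $\langle b,c\rangle$. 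Once the agreement set is shown to be $X_2^+$, the idempotence of closure gives, for every FD $Z\to W\in\depset$ with $Z\subseteq X_2^+$, the containment $W\subseteq Z^+\subseteq X_2^+$, so agreement on $Z$ propagates to agreement on $W$; FDs with $Z\not\subseteq X_2^+$ are vacuously satisfied, establishing consistency of $\{\Pi(f),\Pi(f')\}$.
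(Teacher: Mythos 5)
Your proof is correct and takes essentially the same approach as the paper's: the same region-by-region fact-wise reduction with a distinguished constant $\odot$ on $X_1\cap X_2$, the values $a$ and $b$ on $X_1\setminus X_2$ and $X_2\setminus X_1$, and $\langle b,c\rangle$ on $X_2^+\setminus X_2$, with consistency preservation hinging on the agreement set being exactly $X_1\cap X_2$ or exactly $X_2^+$. The only (harmless, arguably cleaner) difference is that you assign $\langle a,b\rangle$ where the paper assigns $\langle a,c\rangle$ on $(X_1^+\setminus X_1)\setminus X_2^+$ and $a$ on the residual attributes, so every attribute of $X_1^+\setminus X_1$ carries a $b$-dependent value and the violation of $A\rightarrow B$ is witnessed directly on $Y_1\setminus X_1$, whereas the paper instead invokes an entailed FD $X_1\rightarrow A_k$ with $A_k$ chosen separately in each of the two cases of the lemma.
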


\begin{proof}
We define a fact-wise reduction $\Pi:\sabc \rightarrow \signature$, using $X_1\rightarrow Y_1$ and $X_2\rightarrow Y_2$ and the constant $\odot \in \consts$.
Let $f = \sabc(a,b,c)$ be a fact over $\sabc$ and let $\set{A_1,\dots,A_n}$ be the set of attributes in the single relation of $\signature$.
We define $\Pi $ as follows:
\[
\Pi (f)[A_k] \eqdef
\begin{cases}
\odot & \mbox{$A_k\in X_1\cap X_2$} \\ 
 a& \mbox{$A_k\in X_1 \setminus X_2$}\\ 
 b& \mbox{$A_k\in X_2 \setminus X_1$} \\
 \langle a,c\rangle& \mbox{$A_k\in X_1^+\setminus X_1\setminus X_2^+$} \\
 \langle b,c\rangle& \mbox{$A_k\in X_2^+\setminus X_2$} \\
 a& \mbox{otherwise}
\end{cases}
\]
It is left to show that $\Pi$ is a fact-wise reduction.
To do so, we prove that $\Pi$ is well defined, injective and preserves consistency and inconsistency.

\partitle{$\mathbf{\Pi}$ is well defined}
This is straightforward from the definition and the fact that $(X_2^+ \setminus X_2)\cap X_1 = \emptyset$ in both cases.

\partitle{$\mathbf{\Pi}$ is injective}
Let $f,f'$ be two facts, such that $f=\rtfd(a,b,c)$ and $f' = \rabc(a',b',c')$.
Assume that $\Pi (f) = \Pi (f')$.  Let us denote
$\Pi (f)=R(x_1,\dots, x_n)$ and $\Pi (f')=R(x'_1,\dots, x'_n)$.
Note that $X_1 \setminus X_2$ and $X_2 \setminus X_1$ are not empty since $X_1\neq X_2$. Moreover, since both FDs are minimal, $X_1\not\subset X_2$ and $X_2\not\subset X_1$.
Therefore, there are $l$ and $p$ such that $\Pi (f)[A_l]= a$, $\Pi (f)[A_p] = b$. Furthermore, since $X_2\rightarrow Y_2$ is not trivial, there is at least one $m$ such that $\Pi (f)[A_m]=\langle b,c\rangle$.
Hence, $\Pi (f) = \Pi (f')$ implies that $\Pi (f)[A_l] = \Pi (f')[A_l]$, $\Pi (f)[A_p] = \Pi (f')[A_p]$ and $\Pi (f)[A_m] = \Pi (f')[A_m]$. We obtain that
$a=a'$, $b=b'$ and $c=c'$, which implies $f=f'$.

\partitle{$\mathbf{\Pi}$ preserves consistency}
Let $f=\rabc(a,b,c)$ and $f' = \rabc(a',b',c')$ be two distinct facts.
We contend that  the set $\{f,f'\}$ is consistent w.r.t. $\dabc$ if and only if the set $\{\Pi(f),\Pi(f')\}$ is consistent w.r.t. $\depset$.
\paragraph*{The ``if'' direction}
Assume that $\{f,f'\}$ is consistent w.r.t $\dabc$. We prove that  $\{\Pi(f),\Pi(f')\}$ is consistent w.r.t $\depset$. First, note that each FD that contains an attribute $A_k\not\in (X_1^+\cup X_2^+)$ on its left-hand side is satisfied by $\{\Pi(f),\Pi(f')\}$, since $f$ and $f'$ cannot agree on $A$ (otherwise, the FDs $A\rightarrow B$ and $B\rightarrow C$ imply that $f=f'$). Thus, from now on we will only consider FDs that do not contain an attribute $A_k\not\in (X_1^+\cup X_2^+)$ on their left-hand side. One of the following holds:
\begin{itemize}
\item $a\neq a'$, $b= b'$ and $c= c'$. In this case, $\Pi(f)$ and $\Pi(f')$ only agree on the attributes $A_k$ such that $A_k\in X_1\cap X_2$ or $A_k\in X_2\setminus X_1$ or $A_k\in X_2^+\setminus X_2$. That is, they only agree on the attributes $A_k$ such that $A_k\in X_2^+$. Thus, each FD that contains an attribute $A_k\not\in X_2^+$ on its left-hand side is satisfied. Moreover, any FD that contains only attributes $A_k\in X_2^+$ on its left-hand side, also contains only attributes $A_k\in X_2^+$ on its right-hand side (by definition of a closure), thus $\Pi(f)$ and $\Pi(f')$ agree on both the left-hand side and the right-hand side of such FDs and $\{\Pi(f),\Pi(f')\}$ satisfies all the FDs in $\depset$.
\item $a\neq a'$, $b\neq b'$. In this case, $\Pi(f)$ and $\Pi(f')$ only agree on the attributes $A_k$ such that $A_k\in X_1\cap X_2$. Since $X_1\rightarrow Y_1$ and $X_2\rightarrow Y_2$ are minimal, there is no FD in $\depset$ that contains only attributes $A_k$ such that $A_k\in X_1\cap X_2$ on its left-hand side. Thus, $\Pi(f)$ and $\Pi(f')$ do not agree on the left-hand side of any FD in $\depset$ and $\{\Pi(f),\Pi(f')\}$ is consistent w.r.t. $\depset$.
\end{itemize}
This concludes our proof of the ``if'' direction.

\paragraph*{The ``only if'' direction}
Assume $\set{f,f'}$ is inconsistent w.r.t. $\dabc$. We prove that $\{\Pi(f),\Pi(f')\}$ is inconsistent w.r.t. $\depset$.
Since $\set{f,f'}$ is inconsistent w.r.t. $\dabc$, one of the following holds:
\begin{itemize}
\item $a=a'$ and $b\neq b'$. For the first case of this lemma, since $(X_1^+\setminus X_1)\cap (X_2^+\setminus X_2) \neq \emptyset$ and since $(X_1^+\setminus X_1)\cap X_2 =(X_2^+\setminus X_2)\cap X_1 = \emptyset$, at least one attribute $A_k\in (X_1^+\setminus X_1)$ also belongs to $X_2^+\setminus X_2$ and it holds that $\Pi(f)[A_k]=\langle b,c\rangle$. For the second case of this lemma, since $(X_1^+\setminus X_1)\cap X_2 \neq \emptyset$, at least one attribute $A_k\in (X_1^+\setminus X_1)$ also belongs to $X_2$ and it holds that $\Pi(f)[A_k]=b$. Moreover, by definition of a closure, the FD $X_1\rightarrow A_k$ is implied by $\depset$. In both cases, the facts $\Pi(f)$ and $\Pi(f')$ agree on the attributes on the left-hand side of the FD $X_1\rightarrow A_k$, but do not agree on the right-hand side of this FD. If two facts do not satisfy an FD that is implied by a set $\depset$ of FDs, they also do not satisfy $\depset$, thus $\{\Pi(f),\Pi(f')\}$ is inconsistent w.r.t. $\depset$.
\item $a=a'$, $b=b'$ or $c\neq c'$. For the first case of this lemma, as mentioned above, there is an attribute $A_k\in (X_1^+\setminus X_1)$ such that $\Pi(f)[A_k]=\langle b,c\rangle$. Moreover, by definition of a closure, the FDs $X_1\rightarrow A_k$ is implied by $\depset$. The facts $\Pi(f)$ and $\Pi(f')$ agree on the attributes on the left-hand side of the FD $X_1\rightarrow A_k$, but do not agree on the right-hand side of this FD. For the second case of this lemma, since $(X_2^+\setminus X_2)\cap X_1 = \emptyset$ and since the FD $X_2\rightarrow Y_2$ is not trivial, there is at least one attribute $A_k\in (X_2^+\setminus X_2)$ such that $\Pi(f)[A_k]=\langle b,c \rangle$. Furthermore, the FDs $X_2\rightarrow A_k$ is implied by $\depset$. The facts $\Pi(f)$ and $\Pi(f')$ again agree on the attributes on the left-hand side of the FD $X_2\rightarrow A_k$, but do not agree on the right-hand side of this FD. If two facts do not satisfy an FD that is implied by a set $\depset$ of FDs, they also do not satisfy $\depset$, thus $\{\Pi(f),\Pi(f')\}$ is inconsistent w.r.t. $\depset$.
\item $a\neq a'$, $b=b'$ and $c\neq c'$. In this case, $\Pi(f)$ and $\Pi(f')$ agree on the attributes on the left-hand side of the FD $X_2\rightarrow Y_2$, but do not agree on the right-hand side of this FD (since the FD is not trivial and contains at least one attribute $A_k$ such that $\Pi(f)[A_k]=\langle b,c\rangle$ on its right-hand side). Therfore, $\{\Pi(f),\Pi(f')\}$ is inconsistent w.r.t. $\depset$.
\end{itemize}
This concludes our proof of the ``only if'' direction.
\end{proof}

\begin{lemma}\label{lemma:triangle-reduction}
Let $(\signature,\depset)$ be an FD schema, such that $\depset$ contains three local minima $X_1\rightarrow Y_1$, $X_2\rightarrow Y_2$ and $X_k\rightarrow Y_k$.
Then, there is a fact-wise reduction from $\str$ to $\signature$.
\end{lemma}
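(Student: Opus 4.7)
The plan is to construct a fact-wise reduction $\Pi:\str\to\signature$ that treats the three local minima $X_1, X_2, X_k$ as the three symmetric axes of $\str$. I would arrange the encoding so that, for $f = \rtr(a,b,c)$ and $f' = \rtr(a',b',c')$, the images $\Pi(f)$ and $\Pi(f')$ agree on $X_1$ iff $a=a'$ and $b=b'$; on $X_2$ iff $a=a'$ and $c=c'$; and on $X_k$ iff $b=b'$ and $c=c'$. Since two facts over $\str$ are inconsistent w.r.t.\ $\dtr$ exactly when they agree on precisely two coordinates, this correspondence matches the three FDs $X_i \to Y_i$ of $\depset$ to the three FDs $AB \to C$, $AC \to B$, $BC \to A$ of $\dtr$.

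Concretely, with a fixed constant $\odot\in\consts$, I would define $\Pi(f)[A_j]$ by cases on the membership of $A_j$ in $X_1, X_2, X_k$. The assignment places $\odot$ on the triple intersection $X_1 \cap X_2 \cap X_k$; the common coordinate $a, b, c$ on the three two-way intersections $(X_1\cap X_2)\setminus X_k$, $(X_1\cap X_k)\setminus X_2$, $(X_2\cap X_k)\setminus X_1$ respectively; the pair encodings $\langle a,b\rangle$, $\langle a,c\rangle$, $\langle b,c\rangle$ on the three single-minimum pieces $X_1\setminus(X_2\cup X_k)$, $X_2\setminus(X_1\cup X_k)$, $X_k\setminus(X_1\cup X_2)$; and the full triple $\langle a,b,c\rangle$ on every attribute of $\signature$ outside $X_1\cup X_2\cup X_k$. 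Well-definedness is immediate from the partitioning, and injectivity follows because local-minimality gives $X_i \not\subseteq X_j$ for $i \neq j$, ensuring that at least one of the above regions records each of $a, b, c$ with enough resolution to reveal any difference between $f$ and $f'$.

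The main obstacle is verifying consistency preservation for every FD $Z\to W$ of $\depset$, not merely for the three $X_i\to Y_i$. For the inconsistency direction, a violation such as $f=\rtr(a,b,c), f'=\rtr(a,b,c')$ of $AB\to C$ yields agreement on all four regions constituting $X_1$, while every region intersecting $Y_1\setminus X_1$ lies outside $X_1$ and records either $c$ or a tuple distinguished by $c$, so $\Pi(f), \Pi(f')$ disagree on some attribute of $Y_1$ and thus violate $X_1\to Y_1$; symmetric arguments handle the other two kinds of violations. For the consistency direction, a consistent pair agrees on at most one coordinate, and one checks case by case that the set of attributes on which $\Pi(f), \Pi(f')$ agree is exactly $X_1\cap X_2$ (when only $a$ agrees), $X_1\cap X_k$ (when only $b$ agrees), $X_2\cap X_k$ (when only $c$ agrees), or just the triple intersection (when none agrees). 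Since each $X_i$ is a local minimum and the three are pairwise distinct, any left-hand side $Z \subseteq X_i\cap X_j \subseteq X_i$ would force $Z = X_i$ by local-minimality of $X_i$, and then $X_i \subseteq X_j$ would contradict local-minimality of $X_j$; hence no FD of $\depset$ is violated. Combining with Lemma~\ref{lemma:fact-wise-sharpp} and Lemma~\ref{lemma:abc-acb-bca-hard} then yields the NP-hardness of $\maxsrepfd{\signature}{\depset}$.
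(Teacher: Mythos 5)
Your construction is exactly the one the paper uses (the same partition of the attributes into the triple intersection, the three pairwise intersections, the three single-minimum pieces, and the remainder, with the same values $\odot$, $a$, $b$, $c$, $\langle a,b\rangle$, $\langle a,c\rangle$, $\langle b,c\rangle$, $\langle a,b,c\rangle$), and your verification of injectivity and of both directions of consistency preservation follows the same reasoning, including the use of local-minimality to rule out any left-hand side contained in a pairwise intersection. The proposal is correct and takes essentially the same approach as the paper.
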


\begin{proof}
We define a fact-wise reduction $\Pi:\str \rightarrow \signature$, using $X_1\rightarrow Y_1$, $X_2\rightarrow Y_2$ and $X_k\rightarrow Y_k$ and the constant $\odot \in \consts$.
Let $f = \str(a,b,c)$ be a fact over $\str$ and let $\set{A_1,\dots,A_n}$ be the set of attributes in the single relation of $\signature$.
We define $\Pi $ as follows:
\[
\Pi (f)[A_k] \eqdef
\begin{cases}
 \odot& \mbox{$A_k\in X_1\cap X_2\cap X_k$}\\
 a& \mbox{$A_k\in (X_1\cap X_2)\setminus X_k$}\\
 b& \mbox{$A_k\in (X_1\cap X_k)\setminus X_2$}\\ 
 c& \mbox{$A_k\in (X_2\cap X_k)\setminus X_1$}\\ 
 \langle a,b\rangle& \mbox{$A_k\in X_1\setminus X_2\setminus X_k$} \\
 \langle a,c\rangle& \mbox{$A_k\in X_2\setminus X_1\setminus X_k$} \\
 \langle b,c\rangle& \mbox{$A_k\in X_k\setminus X_1\setminus X_2$} \\
 \langle a,b,c\rangle& \mbox{otherwise}
\end{cases}
\]
It is left to show that $\Pi$ is a fact-wise reduction.
To do so, we prove that $\Pi$ is well defined, injective and preserves consistency and inconsistency.

\partitle{$\mathbf{\Pi}$ is well defined}
This is straightforward from the definition.

\partitle{$\mathbf{\Pi}$ is injective}
Let $f,f'$ be two facts, such that $f=\rtfd(a,b,c)$ and $f' = \rabc(a',b',c')$.
Assume that $\Pi (f) = \Pi (f')$.  Let us denote
$\Pi (f)=R(x_1,\dots, x_n)$ and $\Pi (f')=R(x'_1,\dots, x'_n)$.
Note that $X_1$ contains at least one attribute that does not belong to $X_k$ (otherwise, it holds that $X_1\subseteq X_k$, which is a contradiction to the fact that $X_k$ is minimal). Thus, there exists an attribute $A_l$ such that either $\Pi (f)[A_l]=a$ or $\Pi (f)[A_l]=\langle a,b\rangle$. Similarly, $X_k$ contains at least one attribute that does not belong to $X_2$. Thus, there exists an attribute $A_p$ such that either $\Pi (f)[A_p]=b$ or $\Pi (f)[A_p]=\langle b,c\rangle$. Finally, $X_2$ contains at least one attribute that does not belong to $X_1$. Thus, there exists an attribute $A_r$ such that either $\Pi (f)[A_r]=c$ or $\Pi (f)[A_r]=\langle a,c\rangle$.
Hence, $\Pi (f) = \Pi (f')$ implies that $\Pi (f)[A_l] = \Pi (f')[A_l]$, $\Pi (f)[A_p] = \Pi (f')[A_p]$ and $\Pi (f)[A_r] = \Pi (f')[A_r]$. We obtain that
$a=a'$, $b=b'$ and $c=c'$, which implies $f=f'$.

\partitle{$\mathbf{\Pi}$ preserves consistency}
Let $f=\rabc(a,b,c)$ and $f' = \rabc(a',b',c')$ be two distinct facts.
We contend that  the set $\{f,f'\}$ is consistent w.r.t. $\dabc$ if and only if the set $\{\Pi(f),\Pi(f')\}$ is consistent w.r.t. $\depset$.
\paragraph*{The ``if'' direction}
Assume that $\{f,f'\}$ is consistent w.r.t $\dabc$. We prove that  $\{\Pi(f),\Pi(f')\}$ is consistent w.r.t $\depset$. Note that $f$ and $f'$ cannot agree on more than one attribute (otherwise, they will violate at least one FD in $\dabc$). Thus, $\Pi(f)$ and $\Pi(f')$ may only agree on attributes that appear in $X_1\cap X_2\cap X_k$ and in one of $X_1\cap X_2$, $X_1\cap X_k$ or $X_2\cap X_k$. As mentioned above, $X_1$ contains at least one attribute that does not belong to $X_k$, thus no FD in $\depset$ contains only attributes from $X_1\cap X_2\cap X_k$ and $X_1\cap X_k$ on its left-hand side (otherwise, $X_1$ will not be minimal). Similarly, no FD in $\depset$ contains only attributes from $X_1\cap X_2\cap X_k$ and $X_2\cap X_k$ on its left-hand side and no FD in $\depset$ contains only attributes from $X_1\cap X_2\cap X_k$ and $X_1\cap X_2$ on its left-hand side. Therefore $\Pi(f)$ and $\Pi(f')$ do not agree on the left-hand side of any FD in $\depset$, and $\{\Pi(f),\Pi(f')\}$ is consistent w.r.t. $\depset$.

\paragraph*{The ``only if'' direction}
Assume $\set{f,f'}$ is inconsistent w.r.t. $\dabc$. We prove that $\{\Pi(f),\Pi(f')\}$ is inconsistent w.r.t. $\depset$.
Since $\set{f,f'}$ is inconsistent w.r.t. $\dabc$, $f$ and $f'$ agree on two attributes, but do not agree on the third one. Thus, one of the following holds:
\begin{itemize}
\item $a=a'$, $b=b'$ and $c\neq c'$. In this case, $\Pi(f)$ and $\Pi(f')$ agree on all of the attributes that appear on the left-hand side of $X_1\rightarrow Y_1$. Since this FD is not trivial, it must contain on its right-hand side an attribute $A_k$ such that $A_k\not\in X_1$. That is, there is at least one attribute $A_k$ that appears on the right-hand side of $X_1\rightarrow Y_1$ such that one of the following holds: \e{(a)} $\Pi(f)[A_k]=c, $\e{(b)} $\Pi(f)[A_k]=\langle a,c\rangle$, \e{(c)} $\Pi(f)[A_k]=\langle b,c\rangle$ or \e{(d)} $\Pi(f)[A_k]=\langle a,b,c\rangle$. Hence, $\Pi(f)$ and $\Pi(f')$ do not satisfy the FD $X_1\rightarrow Y_1$ and $\{\Pi(f),\Pi(f')\}$ is inconsistent w.r.t. $\depset$.
\item $a=a'$, $b\neq b'$ and $c= c'$. This case is symmetric to the first one.  $\Pi(f)$ and $\Pi(f')$ agree on all of the attributes that appear on the left-hand side of $X_2\rightarrow Y_2$, but do not agree on at least one attribute that appears on the right-hand side of the FD.
\item $a\neq a'$, $b=b'$ and $c= c'$. This case is also symmetric to the first one. Here, $\Pi(f)$ and $\Pi(f')$ agree on the left-hand side, but not on the right-hand side of the FD $X_k\rightarrow Y_k$.
\end{itemize}
\end{proof}

\begin{lemma}\label{lemma:last-reduction}
Let $(\signature,\depset)$ be an FD schema, such that $\depset$ contains two local minima $X_1\rightarrow Y_1$ and $X_2\rightarrow Y_2$, and the following holds:
\begin{itemize}
\item $(X_1^+\setminus X_1)\cap X_2\neq\emptyset$ and $(X_2^+\setminus X_2)\cap X_1\neq\emptyset$,
\item $(X_2\setminus X_1)\not\subseteq (X_1^+\setminus X_1)$.
\end{itemize}
Then, there is a fact-wise reduction from $\stk$ to $\signature$.
\end{lemma}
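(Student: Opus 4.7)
The plan is to construct a fact-wise reduction $\Pi:\stk \to \signature$ in the same spirit as Lemmas~\ref{lemma:disjoint-reduction}, \ref{lemma:disjoint-left-reduction}, and \ref{lemma:triangle-reduction}, with the two local minima $X_1\to Y_1$ and $X_2\to Y_2$ playing the roles of $\stk$'s FDs $AB\to C$ and $C\to B$, respectively. The three hypotheses single out three witness attributes: $A_m\in(X_1^+\setminus X_1)\cap X_2$ and $A_n\in(X_2^+\setminus X_2)\cap X_1$ from the first bullet, and $A^*\in(X_2\setminus X_1)\setminus X_1^+$ from the second. These three are the engines of the reduction: $A_m$ will translate a $\stk$-violation of $AB\to C$ into a violation of the implied FD $X_1\to A_m$; $A_n$ will do the same for $C\to B$ via $X_2\to A_n$; and $A^*$ provides a ``free'' $X_2$-slot that escapes $X_1^+$, which is crucial for preserving consistency in cases where the standard encoding would otherwise fail.

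For a fact $f=\rtk(a,b,c)$, I would set $\Pi(f)[A_k]$ by cases on where $A_k$ lies among the sets $X_1\cap X_2$, $(X_1\setminus X_2)\setminus X_2^+$, $(X_1\setminus X_2)\cap X_2^+$, $(X_2\setminus X_1)\cap X_1^+$, $(X_2\setminus X_1)\setminus X_1^+$, $(X_1^+\setminus X_1)\setminus X_2^+$, $(X_2^+\setminus X_2)\setminus X_1^+$, $(X_1^+\cap X_2^+)\setminus(X_1\cup X_2)$, or the complement of $X_1^+\cup X_2^+$, with values chosen from the symbols $\odot,a,b,c,\langle a,c\rangle,\langle b,c\rangle,\langle a,b,c\rangle$. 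The values are constrained by two guiding principles: at an attribute in $X_1^+$ the value must be a function of $(a,b)$ alone, and at an attribute in $X_2^+$ the value must be a function of $c$ alone. Compatibility at $X_1^+\cap X_2^+$ uses $\stk$'s FD $C\to B$ to make $b$ simultaneously a function of $(a,b)$ and of $c$, so $b$ is placed there. Well-definedness follows from disjointness of the cases, and injectivity follows from the non-emptiness of the three witness sets, from which $b$ is read at $A_n$, $c$ at $A_m$ or $A^*$, and $a$ at whichever slot is designated as the $a$-carrier.

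The ``only if'' direction of consistency preservation is the easier half. A $\stk$-violation of $AB\to C$ (so $a=a'$, $b=b'$, $c\neq c'$) makes $\Pi(f)$ and $\Pi(f')$ agree on all of $X_1$ (whose values are functions of $(a,b)$) but disagree at $A_m$ (whose value carries $c$-information), so the implied FD $X_1\to A_m$ is violated; the $C\to B$ case is symmetric via $A_n$ and $X_2\to A_n$.

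The ``if'' direction is the technical heart of the proof and is where I expect the main obstacle. I would argue case by case over the consistent $\stk$-configurations and over every FD $Z\to W\in\depset$: either $Z$ contains an attribute whose value is sensitive to a coordinate that differs in the current case (so $Z$-agreement fails and $Z\to W$ is vacuously satisfied), or $Z$ lies entirely inside $X_1^+$ or inside $X_2^+$, in which case local minimality of $X_1$ and $X_2$ (which forbids FD LHSes strictly contained in $X_1$ or in $X_2$) constrains the shape of $Z$ and one deduces $W$-agreement from the closure of the $\Pi$-agreement set under $\depset$. The most delicate sub-case is $X_1\subseteq X_2^+$: here $(X_1\setminus X_2)\setminus X_2^+$ is empty, so the standard $a$-slot inside $X_1$ disappears, and the encoding has to route $a$-information through $A^*$ and attributes outside $X_1^+\cup X_2^+$; one has to check that this rerouting does not produce a spurious $\Pi$-violation in any of the consistent configurations.
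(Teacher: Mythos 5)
Your opening design decision is backwards, and the error is fatal rather than cosmetic: you assign $X_1\rightarrow Y_1$ the role of $AB\rightarrow C$ and $X_2\rightarrow Y_2$ the role of $C\rightarrow B$, whereas the asymmetric hypothesis $(X_2\setminus X_1)\not\subseteq (X_1^+\setminus X_1)$ forces the opposite assignment, which is the one the paper uses. In the paper's encoding the attributes of $X_1$ carry $c$ (so $X_1\rightarrow Y_1$ simulates $C\rightarrow B$), the attributes of $X_2$ carry functions of $(a,b)$ (so $X_2\rightarrow Y_2$ simulates $AB\rightarrow C$), with $b$ on $(X_2\setminus X_1)\cap(X_1^+\setminus X_1)$, $\langle a,b\rangle$ on $(X_2\setminus X_1)\setminus(X_1^+\setminus X_1)$ --- your $A^*$, which is exactly the slot from which $a$ is recovered --- and $\langle b,c\rangle$ on the remainder of $X_1^+$. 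The three consistent configurations then produce agreement sets equal to $X_1\cap X_2$, to exactly $X_1^+$, and to a proper subset of $X_2$ (properness witnessed by $A^*$), each disposed of by local minimality or closedness; no nine-way case split and no values $\langle a,c\rangle$ are needed.

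To see that your assignment cannot be repaired, follow your own requirements. For a $C\rightarrow B$ violation to become a violation of $X_2\rightarrow A_n$, every value on $X_2$ must be a function of $c$ alone; hence in the consistent configuration $a\neq a'$, $b=b'$, $c=c'$ the images agree on all of $X_2$ and therefore must agree on all of $X_2^+$, so every value on $X_2^+$ is a function of $(b,c)$. Now take $R(A,B,B',C,D)$ with $\depset=\set{A\rightarrow BC,\ BB'\rightarrow AD}$, $X_1=\set{A}$, $X_2=\set{B,B'}$: all three hypotheses hold, no simplification applies, and $X_1\subseteq X_2^+$. Then the value at $A$ is a function of $(a,b)$ (your requirement for translating $AB\rightarrow C$) and of $(b,c)$, hence of $b$ alone; so in the consistent configuration $a\neq a'$, $b=b'$, $c\neq c'$ the images agree on $X_1=\set{A}$ and consistency forces them to agree on $X_1^+\ni A_m=B$, yet your plan needs the value at $A_m$ to separate $c$ from $c'$. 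Either a consistent pair is mapped to an inconsistent one, or the $AB\rightarrow C$ violation is not preserved --- and with $\depset$ consisting of only these two FDs there is no other FD to fall back on. You correctly flag the sub-case $X_1\subseteq X_2^+$ as the danger point, but the two escape routes you suggest are closed: $A^*$ lies in $X_2$ and so must carry a function of $c$ by your own requirement, and in the example above $X_1^+\cup X_2^+$ is the entire attribute set, so there are no outside attributes to route $a$ through. Swapping the roles makes the delicate sub-case vanish, because the $a$-carrying slot becomes $A^*$ itself, whose existence is precisely what the second bullet of the hypothesis guarantees.
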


\begin{proof}
We define a fact-wise reduction $\Pi:\stk \rightarrow \signature$, using $X_1\rightarrow Y_1$, $X_2\rightarrow Y_2$ and the constant $\odot \in \consts$.
Let $f = \str(a,b,c)$ be a fact over $\stk$ and let $\set{A_1,\dots,A_n}$ be the set of attributes in the single relation of $\signature$.
We define $\Pi $ as follows:
\[
\Pi (f)[A_k] \eqdef
\begin{cases}
\odot & \mbox{$A_k\in X_1\cap X_2$} \\ 
 c& \mbox{$A_k\in X_1\setminus X_2$}\\ 
 b& \mbox{$A_k\in (X_2\setminus X_1) \cap (X_1^+\setminus X_1)$}\\ 
 \langle a,b \rangle& \mbox{$A_k\in (X_2\setminus X_1) \setminus (X_1^+\setminus X_1)$}\\ 
 \langle b,c\rangle& \mbox{$A_k\in (X_1^+\setminus X_1)\setminus (X_2\setminus X_1)$}\\
 \langle a,b,c\rangle& \mbox{otherwise}
\end{cases}
\]
It is left to show that $\Pi$ is a fact-wise reduction.
To do so, we prove that $\Pi$ is well defined, injective and preserves consistency and inconsistency.

\partitle{$\mathbf{\Pi}$ is well defined}
This is straightforward from the definition.

\partitle{$\mathbf{\Pi}$ is injective}
Let $f,f'$ be two facts, such that $f=\rtk(a,b,c)$ and $f' = \rtk(a',b',c')$.
Assume that $\Pi (f) = \Pi (f')$.  Let us denote
$\Pi (f)=R(x_1,\dots, x_n)$ and $\Pi (f')=R(x'_1,\dots, x'_n)$.
Since the FD $X_2\rightarrow Y_2$ is a local minimum, it holds that $X_1\not\subseteq X_2$. Thus, there is an attribute that appears in $X_1$, but does not appear in $X_2$. Moreover, it holds that $(X_2\setminus X_1)\not\subseteq (X_1^+\setminus X_1)$, thus $X_2\setminus X_1$ contains at least one attribute that does not appear in $X_1^+\setminus X_1$.
Therefore, there are $l$ and $p$ such that $\Pi (f)[A_l] = c$, $\Pi (f)[A_p] = \langle a,b\rangle$. 
Hence, $\Pi (f) = \Pi (f')$ implies that $\Pi (f)[A_l] = \Pi (f')[A_l]$ and $\Pi (f)[A_p] = \Pi (f')[A_p]$. We obtain that
$a=a'$, $b=b'$ and $c=c'$, which implies $f=f'$.

\partitle{$\mathbf{\Pi}$ preserves consistency}
Let $f=\rtk(a,b,c)$ and $f' = \rtk(a',b',c')$ be two distinct facts.
We contend that  the set $\{f,f'\}$ is consistent w.r.t. $\dtk$ if and only if the set $\{\Pi(f),\Pi(f')\}$ is consistent w.r.t. $\depset$.
\paragraph*{The ``if'' direction}
Assume that $\{f,f'\}$ is consistent w.r.t $\dtk$. We prove that  $\{\Pi(f),\Pi(f')\}$ is consistent w.r.t $\depset$. One of the following holds:
\begin{itemize}
\item $b\neq b'$ and $c\neq c'$. In this case, $\Pi(f)$ and $\Pi(f')$ only agree on the attributes $A_k$ such that $A_k\in X_1\cap X_2$. Since $X_1\rightarrow Y_1$ and $X_2\rightarrow Y_2$ are local minima, there is no FD in $\depset$ that contains on its left-hand side only attributes $A_k$ such that $A_k\in X_1\cap X_2$. Thus, $\Pi(f)$ and $\Pi(f')$ do not agree on the left-hand side of any FD in $\depset$ and $\{\Pi(f),\Pi(f')\}$ is consistent w.r.t. $\depset$.
\item $a\neq a'$, $b= b'$ and $c=c'$. Note that in this case $\Pi(f)$ and $\Pi(f')$ agree on all of the attributes that belong to $X_1^+$, and only on these attributes. Any FD in $\depset$ that contains only attributes from $X_1^+$ on its left-hand side, also contains only attributes from $X_1^+$ on its right-hand side. Thus, $\Pi(f)$ and $\Pi(f')$ satisfy all the FDs in $\depset$.
\item $a\neq a'$, $b= b'$ and $c\neq c'$. In this case, $\Pi(f)$ and $\Pi(f')$ only agree on the attributes $A_k$ such that $A_k\in X_1\cap X_2$ or $A_k\in (X_2\setminus X_1) \cap (X_1^+\setminus X_1)$. Since the FD $X_2\rightarrow Y_2$ is a local minimum, and since $X_2$ contains an attributes that does not belong to $X_1^+\setminus X_1$, no FD in $\depset$ contains on its left-hand side only attributes $A_k$ such that $A_k\in X_1\cap X_2$ or $A_k\in (X_2\setminus X_1) \cap (X_1^+\setminus X_1)$. Thus,  $\Pi(f)$ and $\Pi(f')$ do not agree on the left-hand side of any FD in $\depset$ and $\{\Pi(f),\Pi(f')\}$ is consistent w.r.t. $\depset$.
\end{itemize}
This concludes our proof of the ``if'' direction.

\paragraph*{The ``only if'' direction}
Assume $\set{f,f'}$ is inconsistent w.r.t. $\dtk$. We prove that $\{\Pi(f),\Pi(f')\}$ is inconsistent w.r.t. $\depset$.
Since $\set{f,f'}$ is inconsistent w.r.t. $\dtk$, one of the following holds:
\begin{itemize}
\item $a=a$, $b= b'$ and $c\neq c'$. In this case, $\Pi(f)$ and $\Pi(f')$ agree on all of the attributes that appear in $X_2$. Since $X_2\rightarrow Y_2$ is not trivial, there is an attribute $A_k$ in $Y_2$ that does not belong to $X_2$. That is, one of the following holds: \e{(a)} $\Pi(f)[A_k]=c$, \e{(b)} $\Pi(f)[A_k]=\langle b,c \rangle$ or \e{(c)} $\Pi(f)[A_k]=\langle a,b,c \rangle$. Thus, $\{\Pi(f),\Pi(f')\}$ violates the FD $X_2\rightarrow Y_2$ and it is inconsistent w.r.t. $\depset$.
\item $b\neq b'$ and $c= c'$. In this case, $\Pi(f)$ and $\Pi(f')$ agree on all of the attributes that appear in $X_1$. Since $X_1\rightarrow Y_1$ is not trivial, there is an attribute $A_k$ in $Y_1$ that does not belong to $X_1$. That is, one of the following holds: \e{(a)} $\Pi(f)[A_k]=b$, \e{(b)} $\Pi(f)[A_k]=\langle a,b \rangle$, \e{(c)} $\Pi(f)[A_k]=\langle b,c \rangle$ or \e{(d)} $\Pi(f)[A_k]=\langle a,b,c \rangle$. Thus, $\{\Pi(f),\Pi(f')\}$ violates the FD $X_1\rightarrow Y_1$ and it is again inconsistent w.r.t. $\depset$.
\end{itemize}
This concludes our proof of the ``only if'' direction.
\end{proof}

Next, we prove, for each one of the three simplifications, that if the problem of finding the C-repair is hard after applying the simplification to a schema $(\signature,\depset)$, then it is also hard for the original schema $(\signature,\depset)$.

\begin{lemma}\label{lemma:s1-hard}
Let $(\signature,\depset)$ be an FD schema. If simplification $1$ can be applied to $(\signature,\depset)$ and $\pi_{\overline{\set{A_i}}}(\depset)\neq\emptyset$, then there is a fact-wise reduction from $(\pi_{\overline{\set{A_i}}}(\signature),\pi_{\overline{\set{A_i}}}(\depset))$ to $(\signature,\depset)$.
\end{lemma}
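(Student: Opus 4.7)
The plan is to exhibit the obvious embedding: fix any constant $\odot \in \consts$ and define $\Pi$ by inserting $\odot$ in the $A_i$ column while keeping every other value unchanged. That is, for a fact $g = R(c_1,\dots,c_{i-1},c_{i+1},\dots,c_k)$ over $\pi_{\overline{\set{A_i}}}(\signature)$, set $\Pi(g)\eqdef R(c_1,\dots,c_{i-1},\odot,c_{i+1},\dots,c_k)$ with $\odot$ placed in the $A_i$ position.

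First I would verify the three easy properties. Polynomial-time computability is immediate. Injectivity is immediate as well, because $\Pi$ only inserts a fixed symbol in a fixed position, so $g$ is recovered from $\Pi(g)$ by dropping the $A_i$ coordinate. The heart of the argument is consistency preservation. Because simplification $1$ is applicable, every FD of $\depset$ has the form $X\rightarrow Y$ with $A_i\in X$; and because trivial attributes are stripped beforehand, $A_i\notin Y$. Its projection in $\pi_{\overline{\set{A_i}}}(\depset)$ is $(X\setminus\set{A_i})\rightarrow Y$. For any two facts $g,g'$ over the projected schema, the images $\Pi(g)$ and $\Pi(g')$ always agree on $A_i$, so they agree on $X$ iff $g,g'$ agree on $X\setminus\set{A_i}$; and since $A_i\notin Y$, they agree on $Y$ iff $g,g'$ do. Hence $\set{\Pi(g),\Pi(g')}$ violates $X\rightarrow Y$ exactly when $\set{g,g'}$ violates $(X\setminus\set{A_i})\rightarrow Y$, and this correspondence between the two FD sets yields the biconditional required by property~(2) of a fact-wise reduction.

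I do not anticipate any real obstacle here; the construction is essentially a tautological embedding, and the main thing to be careful about is the trivial-attribute cleanup, which is what ensures $A_i\notin Y$ and makes the projected FD match up symbol-for-symbol on the right-hand side. The hypothesis $\pi_{\overline{\set{A_i}}}(\depset)\neq\emptyset$ plays no role in the reduction itself but merely guarantees that the source schema carries at least one nontrivial FD, which is what we need downstream in order to propagate hardness via Lemma~\ref{lemma:fact-wise-sharpp}.
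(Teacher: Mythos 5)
Your proposal is correct and matches the paper's proof: the same insert-$\odot$-in-the-$A_i$-column embedding, with the same symbol-for-symbol correspondence between FDs of $\depset$ and their projections establishing consistency preservation in both directions. The only cosmetic difference is that you exploit $A_i\in X$ for every FD (which simplification~1 guarantees) to phrase the correspondence as a clean biconditional, whereas the paper argues the two directions separately without needing that fact; both are fine.
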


\begin{proof}
We define a fact-wise reduction $\Pi:\pi_{\overline{\set{A_i}}}(\signature) \rightarrow \signature$, using the constant $\odot \in \consts$.
Let $\set{A_1,\dots,A_m}$ be the set of attributes in the single relation of $\signature$. Let $f$ be a fact over $\pi_{\overline{\set{A_i}}}(\signature)$.
We define $\Pi $ as follows:
\[
\Pi (f)[A_k] \eqdef
\begin{cases}
\odot & \mbox{$A_k=A_i$} \\ 
f[A_k]& \mbox{otherwise}
\end{cases}
\]
It is left to show that $\Pi$ is a fact-wise reduction.
To do so, we prove that $\Pi$ is well defined, injective and preserves consistency and inconsistency.

\partitle{$\mathbf{\Pi}$ is well defined}
Every attribute in $\set{A_1,\dots,A_m}\setminus\set{A_i}$ also appears in $\pi_{\overline{\set{A_i}}}(\signature)$. Thus, for each attribute $A_k\in\set{A_1,\dots,A_m}\setminus\set{A_i}$ (that is, for each attribute $A_k\neq A_i$), $f[A_k]$ is a valid value. Therefore, $\Pi$ is well defined.

\partitle{$\mathbf{\Pi}$ is injective}
Let $f,f'$ be two distinct facts over $\pi_{\overline{\set{A_i}}}(\signature)$. Since $f\neq f'$, there exists an attribute $A_j$ in $\set{A_1,\dots,A_k}\setminus\set{A_i}$, such that $f[A_j]\neq f'[A_j]$. Thus, it also holds that $\Pi(f)[A_j]\neq \Pi(f')[A_j]$ and $\Pi(f)\neq \Pi(f')$.

\partitle{$\mathbf{\Pi}$ preserves consistency}
Let $f,f'$ be two distinct facts over $\pi_{\overline{\set{A_i}}}(\signature)$.
We contend that the set $\{f,f'\}$ is consistent w.r.t. $\pi_{\overline{\set{A_i}}}(\depset)$ if and only if the set $\{\Pi(f),\Pi(f')\}$ is consistent w.r.t. $\depset$.
\paragraph*{The ``if'' direction}
Assume that $\{f,f'\}$ is consistent w.r.t $\pi_{\overline{\set{A_i}}}(\depset)$. We prove that  $\{\Pi(f),\Pi(f')\}$ is consistent w.r.t $\depset$.  Let us assume, by way of contradiction, that $\{\Pi(f),\Pi(f')\}$ is inconsistent w.r.t $\depset$. That is, there exists an FD $X\rightarrow Y$ in $\depset$, such that $\Pi(f)$ and $\Pi(f')$ agree on all the attributes in $X$, but do not agree on at least one attribute $B$ in $Y$. Clearly, it holds that $B\neq A_i$ (since $\Pi(f)[A_i]=\Pi(f')[A_i]=\odot$). Note that the FD $X\setminus\set{A_i}\rightarrow Y\setminus\set{A_i}$ belongs to $\pi_{\overline{\set{A_i}}}(\depset)$. Since $\Pi(f)[A_k]=f[A_k]$ and $\Pi(f')[A_k]=f'[A_k]$ for each attribute $A_k\neq A_i$, the facts $f$ and $f'$ also agree on all the attributes on the left-hand side of the FD $X\setminus\set{A_i}\rightarrow Y\setminus\set{A_i}$, but do not agree on the attribute $B$ that belongs to $Y\setminus\set{A_i}$. Thus, $f$ and $f'$ violate an FD in $\depset$, which is a contradiction to the fact that the set $\{f,f'\}$ is consistent w.r.t. $\pi_{\overline{\set{A_i}}}(\depset)$.

\paragraph*{The ``only if'' direction}
Assume $\set{f,f'}$ is inconsistent w.r.t. $\pi_{\overline{\set{A_i}}}(\depset)$. We prove that $\{\Pi(f),\Pi(f')\}$ is inconsistent w.r.t. $\depset$.
Since $\set{f,f'}$ is inconsistent w.r.t. $\pi_{\overline{\set{A_i}}}(\depset)$, there exists an FD $X\rightarrow Y$ in $\pi_{\overline{\set{A_i}}}(\depset)$, such that $f$ and $f'$ agree on all the attributes on the left-hand side of the FD, but do not agree on at least one attribute $B$ on its right-hand side. The FD $X\cup\set{A_i}\rightarrow Y\cup\set{A_i}$ belongs to $\depset$, and since $\Pi(f)[A_k]=f[A_k]$ and $\Pi(f')[A_k]=f'[A_k]$ for each attribute $A_k\neq A_i$, and $\Pi(f)[A_i]=\Pi(f')[A_i]=\odot$, it holds that $\Pi(f)$ and $\Pi(f')$ agree on all the attributes in $X\cup\set{A_i}$, but do not agree on the attribute $B\in (Y\cup\set{A_i)}$, thus $\{\Pi(f),\Pi(f')\}$ is inconsistent w.r.t. $\depset$.
\end{proof}

\begin{lemma}\label{lemma:s2-hard}
Let $(\signature,\depset)$ be an FD schema. If simplification $2$ can be applied to $(\signature,\depset)$ and $\pi_{\overline{X}}(\depset)\neq\emptyset$, then there is a fact-wise reduction from $(\pi_{\overline{X}}(\signature),\pi_{\overline{X}}(\depset))$ to $(\signature,\depset)$.
\end{lemma}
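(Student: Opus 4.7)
The plan is to mirror the reduction defined in the proof of Lemma~\ref{lemma:s1-hard}, adapting it to handle the entire attribute set $X$ (where simplification $2$ applies because $\emptyset\rightarrow X\in\depset$) instead of a single attribute. I would define the fact-wise reduction $\Pi:\pi_{\overline{X}}(\signature)\rightarrow\signature$ by padding each fact with a fixed constant $\odot\in\consts$ in every attribute of $X$. Concretely, for a fact $f$ over $\pi_{\overline{X}}(\signature)$ and every attribute $A_k$ of $\signature$, set $\Pi(f)[A_k]\eqdef\odot$ if $A_k\in X$ and $\Pi(f)[A_k]\eqdef f[A_k]$ otherwise. Well-definedness is immediate, since the attributes outside $X$ are precisely those that remain in $\pi_{\overline{X}}(\signature)$. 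Injectivity follows because two distinct facts $f\neq f'$ must disagree on some attribute $A_j\notin X$, and $\Pi$ copies that coordinate verbatim, so $\Pi(f)[A_j]\neq \Pi(f')[A_j]$.

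The substantive step is preservation of (in)consistency. The key observation is that $\Pi(f)$ and $\Pi(f')$ always agree on every attribute of $X$ (both carry $\odot$ there), so the FD $\emptyset\rightarrow X$ is automatically satisfied by $\{\Pi(f),\Pi(f')\}$; moreover, every FD $Z\rightarrow W$ in $\depset$ has a natural counterpart $Z\setminus X\rightarrow W\setminus X$ in $\pi_{\overline{X}}(\depset)$. For the ``if'' direction I would argue contrapositively: if $\{\Pi(f),\Pi(f')\}$ violates some FD $Z\rightarrow W\in\depset$, then the witnessing attribute $B\in W$ on which the two images disagree cannot lie in $X$; consequently $f$ and $f'$ agree on $Z\setminus X$ and disagree on $B\in W\setminus X$, violating the projected FD in $\pi_{\overline{X}}(\depset)$, contradicting the consistency of $\{f,f'\}$. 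The ``only if'' direction is the mirror image: a violation of some $Z'\rightarrow W'$ in $\pi_{\overline{X}}(\depset)$ lifts to a violation of the corresponding original FD $Z\rightarrow W\in\depset$ (with $Z\setminus X=Z'$ and $W\setminus X=W'$), since padding with $\odot$ on $X$ preserves agreement on $Z$ and leaves the disagreement on $W$ intact.

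The argument is essentially bookkeeping; no single step is deep. The only care needed is in tracking the correspondence between FDs of $\depset$ and FDs of $\pi_{\overline{X}}(\depset)$ (in particular remembering that trivial FDs are stripped out, so the hypothesis $\pi_{\overline{X}}(\depset)\neq\emptyset$ is what guarantees a meaningful target schema, rather than playing any role inside the consistency analysis itself) and in verifying that the disagreement witness always falls outside $X$, which is immediate from the choice of the constant $\odot$.
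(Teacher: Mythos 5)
Your proposal is correct and follows essentially the same route as the paper: the same padding map $\Pi(f)[A_k]=\odot$ for $A_k\in X$ and $\Pi(f)[A_k]=f[A_k]$ otherwise, with the same arguments for well-definedness, injectivity, and both directions of consistency preservation (including the key observation that the disagreement witness $B$ cannot lie in $X$).
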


\begin{proof}
We define a fact-wise reduction $\Pi:\pi_{\overline{X}}(\signature) \rightarrow \signature$, using the constant $\odot \in \consts$.
Let $\set{A_1,\dots,A_m}$ be the set of attributes in the single relation of $\signature$. Let $f$ be a fact over $\pi_{\overline{X}}(\signature)$.
We define $\Pi $ as follows:
\[
\Pi (f)[A_k] \eqdef
\begin{cases}
\odot & \mbox{$A_k\in X$} \\ 
f[A_k]& \mbox{otherwise}
\end{cases}
\]
It is left to show that $\Pi$ is a fact-wise reduction.
To do so, we prove that $\Pi$ is well defined, injective and preserves consistency and inconsistency.

\partitle{$\mathbf{\Pi}$ is well defined}
Every attribute in $\set{A_1,\dots,A_m}\setminus X$ also appears in $\pi_{\overline{X}}(\signature)$. Thus, for each attribute $A_k\in\set{A_1,\dots,A_m}\setminus X$ (that is, for each attribute $A_k\not\in X$), $f[A_k]$ is a valid value. Therefore, $\Pi$ is well defined.

\partitle{$\mathbf{\Pi}$ is injective}
Let $f,f'$ be two distinct facts over $\pi_{\overline{X}}(\signature)$. Since $f\neq f'$, there exists an attribute $A_j$ in $\set{A_1,\dots,A_k}\setminus X$, such that $f[A_j]\neq f'[A_j]$. Thus, it also holds that $\Pi(f)[A_j]\neq \Pi(f')[A_j]$ and $\Pi(f)\neq \Pi(f')$.

\partitle{$\mathbf{\Pi}$ preserves consistency}
Let $f,f'$ be two distinct facts over $\pi_{\overline{X}}(\signature)$.
We contend that the set $\{f,f'\}$ is consistent w.r.t. $\pi_{\overline{X}}(\depset)$ if and only if the set $\{\Pi(f),\Pi(f')\}$ is consistent w.r.t. $\depset$.
\paragraph*{The ``if'' direction}
Assume that $\{f,f'\}$ is consistent w.r.t $\pi_{\overline{X}}(\depset)$. We prove that  $\{\Pi(f),\Pi(f')\}$ is consistent w.r.t $\depset$.  Let us assume, by way of contradiction, that $\{\Pi(f),\Pi(f')\}$ is inconsistent w.r.t $\depset$. That is, there exists an FD $Z\rightarrow W$ in $\depset$, such that $\Pi(f)$ and $\Pi(f')$ agree on all the attributes in $Z$, but do not agree on at least one attribute $B$ in $W$. Clearly, it holds that $B\not\in X$ (since $\Pi(f)[A_i]=\Pi(f')[A_i]=\odot$ for each attribute $A_i\in X$). Note that the FD $(Z\setminus X)\rightarrow (W\setminus X)$ belongs to $\pi_{\overline{X}}(\depset)$. Since $\Pi(f)[A_k]=f[A_k]$ and $\Pi(f')[A_k]=f'[A_k]$ for each attribute $A_k\not\in X$, the facts $f$ and $f'$ also agree on all the attributes on the left-hand side of the FD $(Z\setminus X)\rightarrow (W\setminus X)$, but do not agree on the attribute $B$ that belongs to $W\setminus X$. Thus, $f$ and $f'$ violate an FD in $\depset$, which is a contradiction to the fact that the set $\{f,f'\}$ is consistent w.r.t. $\pi_{\overline{X}}(\depset)$.

\paragraph*{The ``only if'' direction}
Assume $\set{f,f'}$ is inconsistent w.r.t. $\pi_{\overline{X}}(\depset)$. We prove that $\{\Pi(f),\Pi(f')\}$ is inconsistent w.r.t. $\depset$.
Since $\set{f,f'}$ is inconsistent w.r.t. $\pi_{\overline{X}}(\depset)$, there exists an FD $Z\rightarrow W$ in $\pi_{\overline{X}}(\depset)$, such that $f$ and $f'$ agree on all the attributes on the left-hand side of the FD, but do not agree on at least one attribute $B$ on its right-hand side. The FD $(Z\cup X)\rightarrow (W\cup X)$ belongs to $\depset$, and since $\Pi(f)[A_k]=f[A_k]$ and $\Pi(f')[A_k]=f'[A_k]$ for each attribute $A_k\not\in X$, and $\Pi(f)[A_k]=\Pi(f')[A_k]=\odot$ for each attribute $A_k\in X$, it holds that $\Pi(f)$ and $\Pi(f')$ agree on all the attributes in $Z\cup X$, but do not agree on the attribute $B\in (W\cup X)$, thus $\{\Pi(f),\Pi(f')\}$ is inconsistent w.r.t. $\depset$.
\end{proof}

\begin{lemma}\label{lemma:s3-hard}
Let $(\signature,\depset)$ be an FD schema. If simplification $3$ can be applied to $(\signature,\depset)$ and $\pi_{\overline{X_1\cup X_2}}(\depset)$ is not empty, there is a fact-wise reduction from $(\pi_{\overline{X_1\cup X_2}}(\signature),\pi_{\overline{X_1\cup X_2}}(\depset))$ to $(\signature,\depset)$.
\end{lemma}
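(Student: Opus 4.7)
The plan is to mirror the two preceding ``hardness'' lemmas by building a padding reduction that uses a fresh constant $\odot\in\consts$ on the removed attributes. Concretely, I would define $\Pi:\pi_{\overline{X_1\cup X_2}}(\signature)\to\signature$ by
\[
\Pi(f)[A_k]\eqdef\begin{cases}\odot & A_k\in X_1\cup X_2,\\ f[A_k] & \text{otherwise,}\end{cases}
\]
and then verify the three properties required of a fact-wise reduction.

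Well-definedness is immediate, since $f[A_k]$ is defined exactly when $A_k\notin X_1\cup X_2$. Injectivity follows from the observation that two distinct facts $f\neq f'$ over $\pi_{\overline{X_1\cup X_2}}(\signature)$ must disagree on some attribute $A_j\notin X_1\cup X_2$, and $\Pi$ is the identity there, so $\Pi(f)[A_j]\neq\Pi(f')[A_j]$.

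The main step is showing that $\Pi$ preserves consistency and inconsistency. The key leverage is the structural precondition of simplification~$3$: every FD $Z\rightarrow W$ in $\depset$ satisfies $X_1\subseteq Z$ or $X_2\subseteq Z$, and the FDs of $\pi_{\overline{X_1\cup X_2}}(\depset)$ are exactly the projections $(Z\setminus(X_1\cup X_2))\rightarrow(W\setminus(X_1\cup X_2))$ of such FDs. Because $\Pi(f)$ and $\Pi(f')$ always agree on every attribute of $X_1\cup X_2$ (both equal $\odot$) and coincide with $f,f'$ elsewhere, we have: $\Pi(f),\Pi(f')$ agree on $Z$ iff $f,f'$ agree on $Z\setminus(X_1\cup X_2)$, and symmetrically for $W$. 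Applying this equivalence in both directions yields that $\{\Pi(f),\Pi(f')\}$ violates $Z\rightarrow W$ in $\depset$ iff $\{f,f'\}$ violates $(Z\setminus(X_1\cup X_2))\rightarrow(W\setminus(X_1\cup X_2))$ in $\pi_{\overline{X_1\cup X_2}}(\depset)$, which delivers both the ``if'' and ``only if'' parts.

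I do not expect a significant obstacle: the argument is essentially a direct adaptation of Lemma~\ref{lemma:s2-hard}, with the set $X_1\cup X_2$ playing the role of the single removed set $X$ there. The only conceptual difference is that the guarantee ``every FD has its left-hand side containing the removed attributes'' now comes from the hypothesis of simplification~$3$ (namely $X_1\subseteq Z$ or $X_2\subseteq Z$ for each $Z\rightarrow W$ in $\depset$) rather than from the presence of $\emptyset\rightarrow X$; otherwise the bookkeeping of attribute agreement under the $\odot$-padding is identical.
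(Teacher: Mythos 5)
Your proposal is correct and matches the paper's proof essentially verbatim: the same $\odot$-padding map on $X_1\cup X_2$, the same well-definedness and injectivity arguments, and the same observation that agreement on $Z$ (resp.\ $W$) for the padded facts is equivalent to agreement on $Z\setminus(X_1\cup X_2)$ (resp.\ $W\setminus(X_1\cup X_2)$) for the originals. (The only cosmetic difference is that you cite the ``$X_1\subseteq Z$ or $X_2\subseteq Z$'' precondition as key leverage, whereas neither your argument nor the paper's actually needs it for this lemma.)
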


\begin{proof}
We define a fact-wise reduction $\Pi:\pi_{\overline{X_1\cup X_2}}(\signature) \rightarrow \signature$, using the constant $\odot \in \consts$.
Let $\set{A_1,\dots,A_m}$ be the set of attributes in the single relation of $\signature$. Let $f$ be a fact over $\pi_{\overline{X_1\cup X_2}}(\signature)$.
We define $\Pi $ as follows:
\[
\Pi (f)[A_k] \eqdef
\begin{cases}
\odot & \mbox{$A_k\in X_1\cup X_2$} \\ 
f[A_k]& \mbox{otherwise}
\end{cases}
\]
It is left to show that $\Pi$ is a fact-wise reduction.
To do so, we prove that $\Pi$ is well defined, injective and preserves consistency and inconsistency.

\partitle{$\mathbf{\Pi}$ is well defined}
Every attribute in $\set{A_1,\dots,A_m}\setminus (X_1\cup X_2)$ also appears in $\pi_{\overline{X_1\cup X_2}}(\signature)$. Thus, for each attribute $A_k\in\set{A_1,\dots,A_m}\setminus (X_1\cup X_2)$, $f[A_k]$ is a valid value. Therefore, $\Pi$ is well defined.

\partitle{$\mathbf{\Pi}$ is injective}
Let $f,f'$ be two distinct facts over the signature $\pi_{\overline{X_1\cup X_2}}(\signature)$. Since $f\neq f'$, there exists an attribute $A_j$ in $\set{A_1,\dots,A_k}\setminus (X_1\cup X_2)$, such that $f[A_j]\neq f'[A_j]$. Thus, it also holds that $\Pi(f)[A_j]\neq \Pi(f')[A_j]$ and $\Pi(f)\neq \Pi(f')$.

\partitle{$\mathbf{\Pi}$ preserves consistency}
Let $f,f'$ be two distinct facts over $\pi_{\overline{X_1\cup X_2}}(\signature)$.
We contend that the set $\{f,f'\}$ is consistent w.r.t. $\pi_{\overline{X_1\cup X_2}}(\depset)$ if and only if the set $\{\Pi(f),\Pi(f')\}$ is consistent w.r.t. $\depset$.
\paragraph*{The ``if'' direction}
Assume that $\{f,f'\}$ is consistent w.r.t $\pi_{\overline{X_1\cup X_2}}(\depset)$. We prove that  $\{\Pi(f),\Pi(f')\}$ is consistent w.r.t $\depset$.  Let us assume, by way of contradiction, that $\{\Pi(f),\Pi(f')\}$ is inconsistent w.r.t $\depset$. That is, there exists an FD $Z\rightarrow W$ in $\depset$, such that $\Pi(f)$ and $\Pi(f')$ agree on all the attributes in $Z$, but do not agree on at least one attribute $B$ in $W$. Clearly, it holds that $B\not\in (X_1\cup X_2)$ (since $\Pi(f)[A_i]=\Pi(f')[A_i]=\odot$ for each attribute $A_i\in (X_1\cup X_2)$). Note that the FD $(Z\setminus (X_1\cup X_2))\rightarrow (W\setminus (X_1\cup X_2))$ belongs to $\pi_{\overline{X_1\cup X_2}}(\depset)$. Since $\Pi(f)[A_k]=f[A_k]$ and $\Pi(f')[A_k]=f'[A_k]$ for each attribute $A_k\not\in (X_1\cup X_2)$, the facts $f$ and $f'$ also agree on all the attributes on the left-hand side of the FD $(Z\setminus (X_1\cup X_2))\rightarrow (W\setminus (X_1\cup X_2))$, but do not agree on the attribute $B$ that belongs to $W\setminus (X_1\cup X_2)$. Thus, $f$ and $f'$ violate an FD in $\depset$, which is a contradiction to the fact that the set $\{f,f'\}$ is consistent w.r.t. $\pi_{\overline{X_1\cup X_2}}(\depset)$.

\paragraph*{The ``only if'' direction}
Assume $\set{f,f'}$ is inconsistent w.r.t. $\pi_{\overline{X_1\cup X_2}}(\depset)$. We prove that $\{\Pi(f),\Pi(f')\}$ is inconsistent w.r.t. $\depset$.
Since $\set{f,f'}$ is inconsistent w.r.t. $\pi_{\overline{X_1\cup X_2}}(\depset)$, there exists an FD $Z\rightarrow W$ in the set $\pi_{\overline{X_1\cup X_2}}(\depset)$, such that $f$ and $f'$ agree on all the attributes on the left-hand side of the FD, but do not agree on at least one attribute $B$ on its right-hand side. The FD $(Z\cup X_1\cup X_2)\rightarrow (W\cup X_1\cup X_2)$ belongs to $\depset$, and since $\Pi(f)[A_k]=f[A_k]$ and $\Pi(f')[A_k]=f'[A_k]$ for each attribute $A_k\not\in (X_1\cup X_2)$, and $\Pi(f)[A_k]=\Pi(f')[A_k]=\odot$ for each attribute $A_k\in (X_1\cup X_2)$, it holds that $\Pi(f)$ and $\Pi(f')$ agree on all the attributes in $Z\cup X_1\cup X_2$, but do not agree on the attribute $B\in (W\cup X_1\cup X_2)$, thus $\{\Pi(f),\Pi(f')\}$ is inconsistent w.r.t. $\depset$.
\end{proof}

\section{Proof of Main Result}
In this section we prove Theorem~\ref{thm:main-maxsrep}. throughout this section we denote the set $X_i^+\setminus X_i$ by $X_i^*$. We start by proving the following lemma.

\begin{lemma}\label{lemma:ind-basis}
Let $(\signature,\depset)$ be an FD schema, such that no simplification can be applied to $(\signature,\depset)$. Then, the problem $\maxsrepfd{\signature}{\depset}$ can be solved in polynomial time if and only if $\depset=\emptyset$.
\end{lemma}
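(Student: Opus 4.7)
The ``if'' direction is immediate: when $\depset=\emptyset$, every subinstance is consistent, so the input instance $I$ is its own (unique) C-repair and $\maxsrepfd{\signature}{\depset}$ is trivially in polynomial time. All the work lies in the ``only if'' direction, for which I intend to show that whenever $\depset\neq\emptyset$ and no simplification applies, the schema $(\signature,\depset)$ is NP-hard by exhibiting a fact-wise reduction from one of the four hard schemas of Table~\ref{table:special-schemas}, whose hardness is established in Lemmas~\ref{lemma:abc-cb-hard}--\ref{lemma:abc-acb-bca-hard}, and then invoking Lemma~\ref{lemma:fact-wise-sharpp}.

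The first step is structural: I argue that $\depset$ cannot be a chain. Indeed, if $\depset$ were a chain, then its global minimum FD $X\to Y$ would be contained in the left-hand side of every other FD; if $X=\emptyset$ then simplification~$2$ applies, and otherwise any attribute of $X$ appears on the left-hand side of every FD in $\depset$, so simplification~$1$ applies. Hence $\depset$ contains two local minima $X_1\to Y_1$ and $X_2\to Y_2$ with $X_1\not\subseteq X_2$ and $X_2\not\subseteq X_1$. I then perform the case analysis listed in the paragraph preceding Lemma~\ref{lemma:disjoint-reduction}, arguing that it exhausts all possibilities for the pair $(X_1,X_2)$ with respect to the closures $X_1^+$ and $X_2^+$.

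For each case I invoke the appropriate reduction: case~(a), where $(X_1^*)\cap X_2^+=(X_2^*)\cap X_1^+=\emptyset$, gives a reduction from $\stfd$ by Lemma~\ref{lemma:disjoint-reduction}; cases~(b) and (c), which both feature $(X_2^*)\cap X_1=\emptyset$ with one-sided interaction on the other side, give a reduction from $\sabc$ by Lemma~\ref{lemma:disjoint-left-reduction} (up to swapping the roles of $X_1$ and $X_2$); case~(e), where the asymmetry $(X_2\setminus X_1)\not\subseteq(X_1^*)$ holds, gives a reduction from $\stk$ by Lemma~\ref{lemma:last-reduction}; finally, case~(d), where both closures absorb the ``new'' attributes symmetrically, is the most delicate. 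Here I need to argue that a third local minimum $X_k\to Y_k$ exists, so that Lemma~\ref{lemma:triangle-reduction} applies and yields a reduction from $\str$. The paragraph before Lemma~\ref{lemma:triangle-reduction} already gives the argument: otherwise every FD $Z\to W$ in $\depset$ satisfies $X_1\subseteq Z$ or $X_2\subseteq Z$, and then either simplification~$1$ (if $X_1\cap X_2\neq\emptyset$, using any attribute of this intersection) or simplification~$3$ (if $X_1\cap X_2=\emptyset$) applies, contradicting the assumption on $(\signature,\depset)$.

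The main obstacle I anticipate is the case analysis itself: verifying exhaustiveness and, in particular, carefully justifying case~(d) where the forced existence of a third local minimum requires ruling out the applicability of simplifications~$1$ and~$3$ simultaneously. Once the cases are correctly enumerated, each one is handled by a single invocation of a prior lemma, so the rest is bookkeeping. Composing the fact-wise reduction with the corresponding NP-hardness result then completes the proof via Lemma~\ref{lemma:fact-wise-sharpp}.
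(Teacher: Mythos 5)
Your proposal is correct and follows essentially the same route as the paper: the trivial ``if'' direction, the observation that $\depset$ cannot be a chain (else simplification~$1$ or~$2$ applies), the extraction of two incomparable local minima, and the same five-way case analysis on how $X_1^+$ and $X_2^+$ interact, resolved by the corresponding fact-wise reductions from the four hard schemas together with Lemma~\ref{lemma:fact-wise-sharpp}. Your treatment of the delicate symmetric case via the forced third local minimum is exactly the paper's argument as well.
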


\begin{proof}
We will start by proving that if $\depset=\emptyset$ then $\maxsrepfd{\signature}{\depset}$ can be solved in polynomial time. Then, we will prove that if $\depset\neq\emptyset$, the problem $\maxsrepfd{\signature}{\depset}$ is NP-hard.
\paragraph*{The ``if'' direction}
Assume that $\depset=\emptyset$. In this case, $I$ is consistent and a C-repair of $I$ will just contain all of the facts in $I$. Thus, there is a polynomial time algorithm for solving $\maxsrepfd{\signature}{\depset}$.

\paragraph*{The ``only if'' direction}
Assume that $\depset\neq\emptyset$. We will prove that $\maxsrepfd{\signature}{\depset}$ is NP-hard. Note that in this case, $\depset$ cannot be a chain. Otherwise, $\depset$ contains a global minimum, which is either an FD of the form $\emptyset\rightarrow X$, in which case simplification $2$ can be applied to the schema, or an FD of the form $X\rightarrow Y$, where $X\neq \emptyset$, in which case it holds that $X\subseteq Z$ for each FD $Z\rightarrow W$ in $\depset$ and simplification $1$ can be applied to the schema. Thus, $\depset$ contains at least two local minima $X_1\rightarrow Y_1$ and $X_2\rightarrow Y_2$ (that is, no FD $Z\rightarrow W$ in $\depset$ is such that $Z\subset X_1$ or $Z\subset X_2$). One of the following holds:
\begin{enumerate}
\item $X_2^*\cap X_1 = \emptyset$. We divide this case into three subcases:
\begin{itemize}
\item $X_1^*\cap X_2^+= \emptyset$. In this case, Lemma~\ref{lemma:ac-bc-hard} and Lemma~\ref{lemma:disjoint-reduction} imply that $\maxsrepfd{\signature}{\depset}$ is NP-hard.
\item $X_1^* \cap X_2^*\neq \emptyset$ and $X_1^*\cap X_2 = \emptyset$. In this case, Lemma~\ref{lemma:ab-bc-hard} and Lemma~\ref{lemma:disjoint-left-reduction} imply that the problem $\maxsrepfd{\signature}{\depset}$ is NP-hard.
\item $X_1^*\cap X_2 \neq \emptyset$. In this case, Lemma~\ref{lemma:ab-bc-hard} and Lemma~\ref{lemma:disjoint-left-reduction} imply that $\maxsrepfd{\signature}{\depset}$ is NP-hard.
\end{itemize}
\item $X_2^*\cap X_1 \neq \emptyset$. We divide this case into three subcases:
\begin{itemize}
\item $X_1^*\cap X_2 \neq \emptyset$ and it holds that $(X_1\setminus X_2)\subseteq X_2^*$ and $(X_2\setminus X_1)\subseteq X_1^*$. In this case, $\depset$ contains at least one more local minimum. Otherwise, for every FD $Z\rightarrow W$ in $\depset$ it holds that either $X_1\subseteq Z$ or $X_2\subseteq Z$. If $X_1\cap X_2\neq\emptyset$, then we can apply simplification $1$ to the schema, using an attribute from $X_1\cap X_2$. If $X_1\cap X_2=\emptyset$, then we can apply simplification $3$ to the schema. In both cases, we get a contradiction to the fact that no simplifications can be applied to the schema. Thus, Lemma~\ref{lemma:abc-acb-bca-hard} and Lemma~\ref{lemma:triangle-reduction} imply that the problem $\maxsrepfd{\signature}{\depset}$ is NP-hard.
\item $X_1^*\cap X_2 \neq \emptyset$ and it holds that $(X_2\setminus X_1)\not\subseteq X_1^*$. In this case, Lemma~\ref{lemma:abc-cb-hard} and Lemma~\ref{lemma:last-reduction} imply that $\maxsrepfd{\signature}{\depset}$ is NP-hard.
\end{itemize}
\end{enumerate}
This concludes our proof of the lemma.
\end{proof}

Next, we prove the correctness of the algorithm $\algname{FindCRep}$.

\begin{lemma}\label{lemma:algorithm-proof}
Let $(\signature,\depset)$ be an FD schema, and let $I$ be an inconsistent instance of $\signature$. If $\maxsrepfd{\signature}{\depset}$ can be solved in polynomial time, then $\algname{FindCRep}(\signature,\depset)$ returns a C-repair of $I$. Otherwise, $\algname{FindCRep}(\signature,\depset)$ returns $\emptyset$.
\end{lemma}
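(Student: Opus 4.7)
The plan is to prove Lemma~\ref{lemma:algorithm-proof} by induction on the number of attributes in $\signature$ (equivalently, on the depth of recursion of $\algname{FindCRep}$, since each simplification strictly removes at least one attribute from the schema before the recursive call). The statement to be established at each level is the conjunction: either $\maxsrepfd{\signature}{\depset}$ is polynomially solvable and $\algname{FindCRep}(\signature,\depset,I)$ outputs a C-repair of $I$, or the problem is NP-hard and $\algname{FindCRep}$ outputs $\emptyset$.

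For the base case I would invoke Lemma~\ref{lemma:ind-basis}: when no simplification applies to $(\signature,\depset)$, the problem is tractable iff $\depset=\emptyset$. If $\depset=\emptyset$ then $I$ is already consistent, so the algorithm correctly returns $I$ on line~3; otherwise every simplification test fails and the algorithm falls through to the final \textbf{return} $\emptyset$, which matches the NP-hardness side.

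For the inductive step, suppose a simplification applies (the algorithm chooses the first one whose guard holds). Let $(\signature',\depset')$ be the simplified schema, which has strictly fewer attributes, so the induction hypothesis applies to it. I split into two directions. If $\maxsrepfd{\signature}{\depset}$ is polynomial, then I claim $\maxsrepfd{\signature'}{\depset'}$ is polynomial too: otherwise the appropriate hardness-preserving fact-wise reduction from Lemma~\ref{lemma:s1-hard}, \ref{lemma:s2-hard}, or \ref{lemma:s3-hard} together with Lemma~\ref{lemma:fact-wise-sharpp} would make the original problem NP-hard, a contradiction. By the induction hypothesis, $\algname{FindCRep}$ solves $\maxsrepfd{\signature'}{\depset'}$ correctly, and then Lemma~\ref{lemma:s1-ptime}, \ref{lemma:s2-ptime}, or \ref{lemma:s3-ptime} (matching the simplification that was applied) guarantees that the corresponding subroutine $\algname{FindCRepS}i$ returns a C-repair of $I$. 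Conversely, if $\maxsrepfd{\signature}{\depset}$ is NP-hard, then $\maxsrepfd{\signature'}{\depset'}$ must also be NP-hard, because otherwise the tractability lemmas would lift polynomial-time solvability back to $(\signature,\depset)$. By induction, the recursive calls on blocks return $\emptyset$, so each subroutine propagates $\emptyset$ upward.

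The main obstacle is verifying that in every branch the ``lifting'' lemmas apply cleanly: in particular, for simplifications~2 and~3 one must check that if $\pi_{\overline{X}}(\depset)=\emptyset$ (respectively $\pi_{\overline{X_1\cup X_2}}(\depset)=\emptyset$) then the algorithm still behaves correctly, because in that degenerate case there is no fact-wise reduction to invoke. But here the recursive call hits the base case with an empty dependency set and returns the block itself, which the matching/maximum-block selection then aggregates correctly by the same reasoning used inside Lemmas~\ref{lemma:s2-ptime} and~\ref{lemma:s3-ptime}. Polynomial running time follows because each recursion strictly decreases the attribute count and, in the simplification~3 branch, a single bipartite maximum-weight matching computation is performed on top of polynomially many recursive calls. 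Combining the base case with the inductive step yields the lemma.
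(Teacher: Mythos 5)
Your proposal is correct and follows essentially the same route as the paper's proof: induction over the recursion (you use the attribute count, the paper uses the maximal number of applicable simplifications, but both measures strictly decrease at each call), with Lemma~\ref{lemma:ind-basis} as the base case, the fact-wise reductions of Lemmas~\ref{lemma:s1-hard}--\ref{lemma:s3-hard} to push tractability down to the simplified schema, and Lemmas~\ref{lemma:s1-ptime}--\ref{lemma:s3-ptime} to lift correctness back up (and, contrapositively, to propagate hardness down so the recursive calls return $\emptyset$). Your explicit treatment of the degenerate case where the projected dependency set becomes empty is a small point of added care that the paper leaves implicit.
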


\begin{proof}
We will prove the lemma by induction on $n$, the maximal number of simplifications that can be applied to $(\signature, \depset)$ (that is, there is a sequence of $n$ simplifications that can be applied to the schema until we reach to a point where we cannot apply any more simplifications, but there is no sequence of $m>n$ simplifications that can be applied to the schema until we reach such a point). The basis of the induction is $n=0$. In this case, Lemma~\ref{lemma:ind-basis} implies that $\maxsrepfd{\signature}{\depset}$ can be solved in polynomial time if and only if $\depset=\emptyset$ and indeed, if $\depset=\emptyset$ the condition of line~2 is satisfied and the algorithm returns a C-repair of $I$ (that contains all the facts), and if $\depset\neq\emptyset$, no condition is satisfied and the algorithm will return $\emptyset$ in line~10.

For the inductive step, we need to prove that if the claim is true for all $n=1,\dots,k-1$, it is also true for $n=k$. Let $(\signature\depset)$ be an FD schema, such that the maximal number of simplifications that can be applied to $(\signature, \depset)$ is $k$. The algorithm will apply some simplification $s$ to the schema. Clearly, the result is a schema $(\signature', \depset')$, such that the maximal number of simplifications that can be applied to $(\signature', \depset')$ is $k-1$. Thus, we know that if $\maxsrepfd{\signature'}{\depset'}$ cannot be solved in polynomial time, then $\algname{FindCRep}(\signature',\depset',J)$ will return $\emptyset$ for each $J$, and if $\maxsrepfd{\signature'}{\depset'}$ can be solved in polynomial time, then it can be solved using $\algname{FindCRep}$, and $\algname{FindCRep}(\signature',\depset',J)$ will return a C-repair of $J$ for each $J$. One of the following holds:

\begin{itemize}
\item $s$ is simplification $1$. In this case, the condition of line~4
  is satisfied and the subroutine $\algname{FindCRepS1}$ will be
  used. Lemma~\ref{lemma:s1-ptime} implies that there is a fact wise
  reduction from $(\pi_{\overline{\set{A_i}}}(\signature),
  \pi_{\overline{\set{A_i}}}(\depset))$ to $(\signature, \depset)$,
  thus if the problem $\maxsrepfd{\signature}{\depset}$ can be solved
  in polynomial time, the problem
  $\maxsrepfd{\pi_{\overline{\set{A_i}}}(\signature)}{\pi_{\overline{\set{A_i}}}(\depset)}$
  can be solved in polynomial time as well. Then, we know from the
  inductive assumption that the problem
  $\maxsrepfd{\pi_{\overline{\set{A_i}}}(\signature)}{\pi_{\overline{\set{A_i}}}(\depset)}$
  can be solved using $\algname{FindCRep}$ and
  Lemma~\ref{lemma:s1-ptime} implies that the problem
  $\maxsrepfd{\signature}{\depset}$ can be solved using
  $\algname{FindCRep}$ as well. In addition,
  Lemma~\ref{lemma:s1-ptime} implies that if the problem
  $\maxsrepfd{\pi_{\overline{\set{A_i}}}(\signature)}{\pi_{\overline{\set{A_i}}}(\depset)}$
  can be solved using $\algname{FindCRep}$, then so does
  $\maxsrepfd{\signature}{\depset}$. Thus, if
  $\maxsrepfd{\signature}{\depset}$ is a hard problem, then the
  problem
  $\maxsrepfd{\pi_{\overline{\set{A_i}}}(\signature)}{\pi_{\overline{\set{A_i}}}(\depset)}$
  is hard as well. We know from the inductive assumption that for each
  $J$, executing
  $\algname{FindCRep}(\pi_{\overline{\set{A_i}}}(\signature),
  \pi_{\overline{\set{A_i}}}(\depset),J)$ returns an empty set, and
  $\algname{FindCRepS1}(\signature,\depset,I)$ will return a union of
  empty sets, which is an empty set.

\item $s$ is simplification $2$. In this case, the condition of line~6
  is satisfied and the subroutine $\algname{FindCRepS2}$ will be
  used. Lemma~\ref{lemma:s2-ptime} implies that there is a fact wise
  reduction from $(\pi_{\overline{X}}(\signature),
  \pi_{\overline{X}}(\depset))$ to $(\signature, \depset)$, thus if
  the problem $\maxsrepfd{\signature}{\depset}$ can be solved in
  polynomial time, so does the problem
  $\maxsrepfd{\pi_{\overline{X}}(\signature)}{\pi_{\overline{X}}(\depset)}$. Then,
  we know from the inductive assumption that the problem
  $\maxsrepfd{\pi_{\overline{X}}(\signature)}{\pi_{\overline{X}}(\depset)}$
  can be solved using $\algname{FindCRep}$ and
  Lemma~\ref{lemma:s2-ptime} implies that the problem
  $\maxsrepfd{\signature}{\depset}$ can be solved using
  $\algname{FindCRep}$ as well. In addition,
  Lemma~\ref{lemma:s2-ptime} implies that if the problem
  $\maxsrepfd{\pi_{\overline{X}}(\signature)}{\pi_{\overline{X}}(\depset)}$
  can be solved using $\algname{FindCRep}$, then so does
  $\maxsrepfd{\signature}{\depset}$. Thus, if
  $\maxsrepfd{\signature}{\depset}$ is a hard problem, then the
  problem
  $\maxsrepfd{\pi_{\overline{X}}(\signature)}{\pi_{\overline{X}}(\depset)}$
  is hard as well. From the inductive assumption we conclude that for each $J$,
  $\algname{FindCRep}(\pi_{\overline{X}}(\signature),
  \pi_{\overline{X}}(\depset),J)$ returns an empty set, and
  $\algname{FindCRepS2}(\signature,\depset,I)$ will return an empty
  set as well (since it returns the instance that contains the most
  fact, and in this case all the instances are empty).

\item $s$ is simplification $3$. In this case, the condition of line~8
  is satisfied and the subroutine $\algname{FindCRepS3}$ will be
  used. Lemma~\ref{lemma:s3-ptime} implies that there is a fact wise
  reduction from the schema $(\pi_{\overline{X_1\cup
      X_2}}(\signature), \pi_{\overline{X_1\cup X_2}}(\depset))$ to
  the schema $(\signature, \depset)$, thus if the problem
  $\maxsrepfd{\signature}{\depset}$ can be solved in polynomial time,
  then the problem $\maxsrepfd{\pi_{\overline{X_1\cup
        X_2}}(\signature)}{\pi_{\overline{X_1\cup X_2}}(\depset)}$ can
  be solved in polynomial time as well. Then, we know from the
  inductive assumption that the algorithm $\algname{FindCRep}$ can be
  used to solve $\maxsrepfd{\pi_{\overline{X_1\cup
        X_2}}(\signature)}{\pi_{\overline{X_1\cup X_2}}(\depset)}$ and
  Lemma~\ref{lemma:s3-ptime} implies that
  $\maxsrepfd{\signature}{\depset}$ can be solved using
  $\algname{FindCRep}$ as well. In addition,
  Lemma~\ref{lemma:s3-ptime} implies that if
  $\maxsrepfd{\pi_{\overline{X_1\cup
        X_2}}(\signature)}{\pi_{\overline{X_1\cup X_2}}(\depset)}$ can
  be solved using $\algname{FindCRep}$, then
  $\maxsrepfd{\signature}{\depset}$ can also be solved using the
  algorithm. Thus, if the problem $\maxsrepfd{\signature}{\depset}$ is
  a hard problem, then the problem $\maxsrepfd{\pi_{\overline{X_1\cup
        X_2}}(\signature)}{\pi_{\overline{X_1\cup X_2}}(\depset)}$ is
  hard as well. From the inductive assumption we have that for each
  $J$, executing $\algname{FindCRep}(\pi_{\overline{X_1\cup
      X_2}}(\signature), \pi_{\overline{X_1\cup X_2}}(\depset),J)$
  returns $\emptyset$, and
  $\algname{FindCRepS3}(\signature,\depset,I)$ will return a union of
  empty sets, which is an empty set.
\end{itemize}
This concludes our proof of the correctness of algorithm $\algname{FindCRep}$.
\end{proof}

Note that the algorithm $\algname{FindCRep}$ starts with an FD schema $(\signature,\depset)$, and at each step it applies one simplification to the current schema. Finally, if no simplification can be applied to the schema and $\depset$ is empty, then the condition of line~2 will be satisfied, and the algorithm will eventually return a C-repair of $I$. If $\depset$ is not empty, the algorithm will return $\emptyset$. The proof of Theorem~\ref{thm:main-maxsrep} is straightforward based on this observation and on Lemma~\ref{lemma:algorithm-proof}.

\bibliographystyle{abbrv}
\bibliography{main} 
\end{document}